\title{An Automata-Theoretic Approach to the Verification of Distributed Algorithms\footnote{Supported by LIA InForMel.}
}
\author[1]{C.~Aiswarya}
\author[2]{Benedikt Bollig}
\author[2]{Paul Gastin}
\affil[1]{Uppsala University\\\texttt{aiswarya.cyriac@it.uu.se}}
\affil[2]{LSV, ENS Cachan, CNRS, Inria\\\texttt{\{bollig,gastin\}@lsv.ens-cachan.fr}}
\authorrunning{C.\ Aiswarya, B.\ Bollig, and P.\ Gastin}
\newcommand{\myqed}{\hfill\ensuremath{\lhd}}
\newcommand{\sem}[1]{\ensuremath{[\![#1]\!]}}
\newcommand{\Sem}[2]{\sem{#1}_{#2}}
\newcommand{\existspath}[1]{\ensuremath{\langle#1\rangle}}
\newcommand{\Existspath}[1]{\ensuremath{\langle#1\rangle}}
\newcommand{\forallpath}[1]{\ensuremath{[#1]}}
\newcommand{\ICPDL}{\ensuremath{\textup{LCPDL}}\xspace}
\newcommand{\loopform}[1]{\mathsf{loop}(#1)}
\newcommand{\test}[1]{\{#1\}?}
\newcommand{\pathaut}{\mathcal{A}}
\newcommand{\msgform}[2]{\mathit{msg}_{#1,#2}^{0,1}}
\newcommand{\updform}[2]{\mathit{upd}_{#1,#2}^{1,2}}
\newcommand{\nextform}[2]{\mathit{next}_{#1,#2}^{2,0}}
\newcommand{\localform}[2]{\mathit{loc}_{#1,#2}^{0,1}}
\newcommand{\passiveform}{\fwdcmd}
\renewcommand{\phi}{\varphi}
\newcommand{\formdef}{=}
\newcommand{\stay}{\epsilon}
\newcommand{\goleft}{\mathord{\leftarrow}}
\newcommand{\goright}{\mathord{\rightarrow}}
\newcommand{\goup}{\mathord{\uparrow}}
\newcommand{\godown}{\mathord{\downarrow}}
\newcommand{\A}{\mathcal{A}}
\newcommand{\col}[2]{#1[#2]}
\newcommand{\coord}[3]{#1[#2,#3]}
\newcommand{\pictofrun}[1]{\pict_{#1}}
\newcommand{\locform}{\phi}
\newcommand{\lcpdl}{\psi}
\newcounter{todocounter}
\renewcommand{\paragraph}{\@startsection{paragraph}{6}{\z@}{2ex}{-0.7em}{\normalsize\bf}}
\providecommand{\leftsquigarrow}{%
  \mathrel{\mathpalette\reflect@squig\relax}%
}
\newcommand{\reflect@squig}[2]{%
  \reflectbox{$\m@th#1\rightsquigarrow$}%
}
\theoremstyle{definition}
\newtheorem{myremark}[theorem]{Remark}
\begin{document}

\maketitle




\newcommand{\sendleft}[1]{\mathbf{left}{!}#1}
\newcommand{\sendright}[1]{\mathbf{right}{!}#1}
\newcommand{\recleft}[1]{\mathbf{left}{?}#1}
\newcommand{\recright}[1]{\mathbf{right}{?}#1}

\newcommand{\conf}{C}

\newcommand{\auxright}[4]{#1@ #2 \rightarrowtail #3@#4}
\newcommand{\auxleft}[4]{#3@#4 \leftarrowtail #1@#2}

\newcommand{\notauxright}[4]{#1@#2 \not\rightarrowtail #3@#4}
\newcommand{\notauxleft}[4]{#3@#4 \not\leftarrowtail #1@#2}

\newcommand{\nextcmd}{\,\text{;}\,}
\newcommand{\skipcmd}{\mathbf{skip}}
\newcommand{\updcmd}[2]{#1 := #2}
\newcommand{\fwdcmd}{\mathbf{fwd}}
\newcommand{\truecmd}{\mathbf{true}}
\newcommand{\gotocmd}[1]{\mathbf{goto}~#1}
\newcommand{\datrans}[6]{\langle#1\textup{:}~#2 \nextcmd #3 \nextcmd #4 \nextcmd #5 \nextcmd \gotocmd{#6}\rangle}

\newcommand{\confcomp}[6]{
\node[Nframe=y,Nw=22,Nh=8,Nmr=3,Nfill=n,fillcolor=blue!25,linecolor=blue](c)(#1,#2){
\begin{tabular}{c}
$#3$\\[-0.6ex]
#4~~~#5~~~#6
\end{tabular}
}}

\newcommand{\conffound}[6]{
\node[Nframe=y,Nw=22,Nh=8,Nmr=3,Nfill=n,fillcolor=ForestGreen!15,linecolor=FoestGreen](c)(#1,#2){
\begin{tabular}{c}
$#3$\\[-0.6ex]
#4~~~#5~~~#6
\end{tabular}
}}

\newcommand{\conffwd}[6]{
\node[Nframe=y,Nw=22,Nh=8,Nmr=3,Nfill=n,fillcolor=BrickRed!15,linecolor=red](c)(#1,#2){
\begin{tabular}{c}
$#3$\\[-0.6ex]
#4~~~#5~~~#6
\end{tabular}
}}

\newcommand{\activeone}[2]{
\node[Nframe=y,Nw=22,Nh=8,Nmr=3,Nfill=n](ialabel)(#1,#2){
\scalebox{0.85}{
\begin{tabular}{c}
$\sendright{\reg}$\\[-0.4ex]
$\recleft{\reg'}$
\end{tabular}
}
}}

\newcommand{\activezero}[2]{
\node[Nframe=y,Nw=22,Nh=16,Nmr=3,Nfill=n](ailabel)(#1,#2){
\scalebox{0.85}{
\begin{tabular}{c}
$\sendright{\reg'}$\\[-0.2ex]
$\recleft{\reg''}$\\[-0.4ex]
$\{\reg'',\reg\} < {\reg'}$\\[-0.6ex]
$\updcmd{\reg}{\reg'}$
\end{tabular}
}
}}

\newcommand{\passiveone}[2]{
\node[Nframe=y,Nw=22,Nh=16,Nmr=3,Nfill=n](aplabel)(#1,#2){
\scalebox{0.85}{
\begin{tabular}{c}
$\sendright{\reg'}$\\[-0.2ex]
$\recleft{\reg''}$\\[-0.3ex]
$\reg' < \reg$
\end{tabular}
}
}}

\newcommand{\passivetwo}[2]{
\node[Nframe=y,Nw=22,Nh=16,Nmr=3,Nfill=n](newaplabel)(#1,#2){
\scalebox{0.85}{
\begin{tabular}{c}
$\sendright{\reg'}$\\[-0.2ex]
$\recleft{\reg''}$\\[-0.3ex]
$\reg' < \reg''$
\end{tabular}
}
}}

\newcommand{\midpassiveone}[2]{
\node[Nframe=y,Nw=22,Nh=12,Nmr=3,Nfill=n](aplabel)(#1,#2){
\scalebox{0.85}{
\begin{tabular}{c}
$\sendright{\reg'}$\\[-0.2ex]
$\recleft{\reg''}$\\[-0.5ex]
$\reg' < \reg$
\end{tabular}
}
}}

\newcommand{\midpassivetwo}[2]{
\node[Nframe=y,Nw=22,Nh=12,Nmr=3,Nfill=n](newaplabel)(#1,#2){
\scalebox{0.85}{
\begin{tabular}{c}
$\sendright{\reg'}$\\[-0.2ex]
$\recleft{\reg''}$\\[-0.5ex]
$\reg' < \reg''$
\end{tabular}
}
}}

\newcommand{\passiveloop}[2]{
\node[Nframe=y,Nw=22,Nh=16,Nmr=3,Nfill=n](pplabel)(#1,#2){
\scalebox{0.85}{
\begin{tabular}{c}
$\fwdcmd$\\[-0.3ex]
$\recleft{\reg}$
\end{tabular}
}
}}

\newcommand{\smallpassiveloop}[2]{
\node[Nframe=y,Nw=22,Nh=8,Nmr=3,Nfill=n](pplabel)(#1,#2){
\scalebox{0.85}{
\begin{tabular}{c}
$\fwdcmd$\\[-0.4ex]
$\recleft{\reg}$
\end{tabular}
}
}}

\newcommand{\foundtrans}[2]{
\node[Nframe=y,Nw=22,Nh=16,Nmr=3,Nfill=n](aelabel)(#1,#2){
\scalebox{0.85}{
\begin{tabular}{c}
$\sendright{\reg'}$\\[-0.1ex]
$\recleft{\reg''}$\\[-0.4ex]
$\reg = \reg'$
\end{tabular}
}
}}


\newcommand{\pto}{\rightharpoonup}

\newcommand{\Ports}{\mathit{Ports}}
\newcommand{\Out}{\mathit{Out}}
\newcommand{\In}{\mathit{In}}
\newcommand{\port}{p}
\newcommand{\porta}{p}
\newcommand{\portb}{q}

\newcommand{\bound}{b}
\newcommand{\N}{\mathbb{N}}

\newcommand{\DA}{\mathcal{D}}

\newcommand{\reg}{\mathit{r}}
\newcommand{\idreg}{\mathit{id}}

\newcommand{\proc}{u}
\newcommand{\proca}{u}
\newcommand{\procb}{v}

\newcommand{\PosBool}[1]{\mathcal{B}^+(#1)}

\newcommand{\States}{{S}}
\newcommand{\Reg}{\mathit{Reg}}
\newcommand{\init}{s_0}
\newcommand{\Passive}{\mathit{Fwd}}
\newcommand{\Final}{{F}}
\newcommand{\Trans}{\Delta}

\newcommand{\Swa}{Q}
\newcommand{\transwa}{\longmapsto}
\newcommand{\xtranswa}[1]{\stackrel{#1}{\longmapsto}}
\newcommand{\iwa}{q_0}
\newcommand{\fwa}{q_f}

\newcommand{\translabel}{a}
\newcommand{\lab}{\alpha}

\newcommand{\activ}{\mathit{active}}
\newcommand{\passiv}{\mathit{passive}}
\newcommand{\electd}{\mathit{found}}
\newcommand{\found}{\mathit{found}}
\newcommand{\final}{\mathit{final}}

\newcommand{\leftp}{\mathsf{left}}
\newcommand{\rightp}{\mathsf{right}}

\newcommand{\outleft}{\mathit{out}_-}
\newcommand{\outright}{\mathit{out}_{+}}
\newcommand{\outp}[1]{\mathit{out}_{#1}}

\newcommand{\inleft}{\mathit{in}_{-}}
\newcommand{\inright}{\mathit{in}_{+}}
\newcommand{\inp}[1]{\mathit{in}_{#1}}

\newcommand{\true}{\mathit{true}}
\newcommand{\false}{\mathit{false}}

\newcommand{\msginprocess}{\texttt{?}}
\newcommand{\msgelected}{\surd}

\newcommand{\Lab}{\Sigma}
\newcommand{\Ext}{\Trans}
\newcommand{\df}{\ensuremath{\mathrel{\smash{\stackrel{\scriptscriptstyle{
    \textup{def}}}{=}}}}}

\newcommand{\set}[1]{[#1]}
\newcommand{\setz}[1]{[#1]_0}

\newcommand{\sym}{\mathsf{sym}}
\newcommand{\head}{\mathsf{head}}
\newcommand{\border}{\mathsf{border}}
\newcommand{\stm}{S_\mathsf{TM}}
\newcommand{\btm}{B_\mathsf{TM}}
\newcommand{\halt}{\textsc{halt}\xspace}
\newcommand{\tm}{\mathsf{TM}}
\newcommand{\Dtm}{\DA_\tm}
\newcommand{\bordertrue}{\mathsf{borderTrue}}
\newcommand{\headtrue}{\mathsf{headTrue}}


\begin{gpicture}[frame,name=symbrun,ignore]
\gasset{Nw=5,Nh=5,AHLength=1.8,AHlength=1.5}
\unitlength=0.9mm

\node[Nframe=n](pid1)(-8,19.8){$p_i=$}
\node[Nframe=y](pid1)(0,20){4}
\node[Nframe=y](pid2)(23,20){8}
\node[Nframe=y](pid3)(46,20){3}
\node[Nframe=y](pid4)(69,20){1}
\node[Nframe=y](pid5)(92,20){6}
\node[Nframe=y](pid6)(115,20){5}
\node[Nframe=y](pid7)(138,20){7}

\put(0,-5){
\node[Nframe=n](i)(-7,20.5){$i=$}
\node[Nframe=n](i1)(0,20){\scalebox{0.7}{1}}
\node[Nframe=n](i2)(23,20){\scalebox{0.7}{2}}
\node[Nframe=n](i3)(46,20){\scalebox{0.7}{3}}
\node[Nframe=n](i4)(69,20){\scalebox{0.7}{4}}
\node[Nframe=n](i5)(92,20){\scalebox{0.7}{5}}
\node[Nframe=n](i6)(115,20){\scalebox{0.7}{6}}
\node[Nframe=n](i7)(138,20){\scalebox{0.7}{7}}
}

\node[Nframe=n](conf)(-16,8){$\conf_0$}
\node[Nframe=n](conf)(-16,-8){$\conf_1$}
\node[Nframe=n](conf)(-16,-32){$\conf_2$}
\node[Nframe=n](conf)(-16,-48){$\conf_3$}
\node[Nframe=n](conf)(-16,-72){$\conf_4$}
\node[Nframe=n](conf)(-16,-88){$\conf_5$}
\node[Nframe=n](conf)(-16,-112){$\conf_6$}

\node[Nframe=n](conf)(-16,0){\rotatebox[origin=cc]{-90}{$\rightsquigarrow$}}
\node[Nframe=n](conf)(-16,-20){\rotatebox[origin=cc]{-90}{$\rightsquigarrow$}}
\node[Nframe=n](conf)(-16,-40){\rotatebox[origin=cc]{-90}{$\rightsquigarrow$}}
\node[Nframe=n](conf)(-16,-60){\rotatebox[origin=cc]{-90}{$\rightsquigarrow$}}
\node[Nframe=n](conf)(-16,-80){\rotatebox[origin=cc]{-90}{$\rightsquigarrow$}}
\node[Nframe=n](conf)(-16,-100){\rotatebox[origin=cc]{-90}{$\rightsquigarrow$}}

\drawedge[curvedepth=0,AHnb=0](pid1,pid2){}
\drawedge[curvedepth=0,AHnb=0](pid2,pid3){}
\drawedge[curvedepth=0,AHnb=0](pid3,pid4){}
\drawedge[curvedepth=0,AHnb=0](pid4,pid5){}
\drawedge[curvedepth=0,AHnb=0](pid5,pid6){}
\drawedge[curvedepth=0,AHnb=0](pid6,pid7){}
\drawedge[curvedepth=8,AHnb=0](pid1,pid7){}

\gasset{Nw=4.3,Nh=3.5}

\activeone{0}{0}
\activeone{23}{0}
\activeone{46}{0}
\activeone{69}{0}
\activeone{92}{0}
\activeone{115}{0}
\activeone{138}{0}

\put(0,-20){
\activezero{0}{0}
\passiveone{23}{0}
\activezero{46}{0}
\passivetwo{69}{0}
\passiveone{92}{0}
\activezero{115}{0}
\passiveone{138}{0}
}

\put(0,-40){
\activeone{0}{0}
\smallpassiveloop{23}{0}
\activeone{46}{0}
\smallpassiveloop{69}{0}
\smallpassiveloop{92}{0}
\activeone{115}{0}
\smallpassiveloop{138}{0}
}

\put(0,-60){
\passivetwo{0}{0}
\passiveloop{23}{0}
\passiveone{46}{0}
\passiveloop{69}{0}
\passiveloop{92}{0}
\activezero{115}{0}
\passiveloop{138}{0}
}

\put(0,-80){
\smallpassiveloop{0}{0}
\smallpassiveloop{23}{0}
\smallpassiveloop{46}{0}
\smallpassiveloop{69}{0}
\smallpassiveloop{92}{0}
\activeone{115}{0}
\smallpassiveloop{138}{0}
}

\put(0,-100){
\passiveloop{0}{0}
\passiveloop{23}{0}
\passiveloop{46}{0}
\passiveloop{69}{0}
\passiveloop{92}{0}
\foundtrans{115}{0}
\passiveloop{138}{0}
}

\confcomp{0}{8}{\activ_0}{4}{4}{4}
\confcomp{23}{8}{\activ_0}{8}{8}{8}
\confcomp{46}{8}{\activ_0}{3}{3}{3}
\confcomp{69}{8}{\activ_0}{1}{1}{1}
\confcomp{92}{8}{\activ_0}{6}{6}{6}
\confcomp{115}{8}{\activ_0}{5}{5}{5}
\confcomp{138}{8}{\activ_0}{7}{7}{7}

\confcomp{0}{-8}{\activ_1}{4}{7}{4}
\confcomp{23}{-8}{\activ_1}{8}{4}{8}
\confcomp{46}{-8}{\activ_1}{3}{8}{3}
\confcomp{69}{-8}{\activ_1}{1}{3}{1}
\confcomp{92}{-8}{\activ_1}{6}{1}{6}
\confcomp{115}{-8}{\activ_1}{5}{6}{5}
\confcomp{138}{-8}{\activ_1}{7}{5}{7}

\put(0,-24){
\confcomp{0}{-8}{\activ_0}{7}{7}{5}
\conffwd{23}{-8}{\passiv}{8}{4}{7}
\confcomp{46}{-8}{\activ_0}{8}{8}{4}
\conffwd{69}{-8}{\passiv}{1}{3}{8}
\conffwd{92}{-8}{\passiv}{6}{1}{3}
\confcomp{115}{-8}{\activ_0}{6}{6}{1}
\conffwd{138}{-8}{\passiv}{7}{5}{6}
}

\put(0,-40){
\confcomp{0}{-8}{\activ_1}{7}{6}{5}
\conffwd{23}{-8}{\passiv}{7}{4}{7}
\confcomp{46}{-8}{\activ_1}{8}{7}{4}
\conffwd{69}{-8}{\passiv}{8}{3}{8}
\conffwd{92}{-8}{\passiv}{8}{1}{3}
\confcomp{115}{-8}{\activ_1}{6}{8}{1}
\conffwd{138}{-8}{\passiv}{6}{5}{6}
}

\put(0,-64){
\conffwd{0}{-8}{\passiv}{7}{6}{8}
\conffwd{23}{-8}{\passiv}{6}{4}{7}
\conffwd{46}{-8}{\passiv}{8}{7}{6}
\conffwd{69}{-8}{\passiv}{7}{3}{8}
\conffwd{92}{-8}{\passiv}{7}{1}{3}
\confcomp{115}{-8}{\activ_0}{8}{8}{7}
\conffwd{138}{-8}{\passiv}{8}{5}{6}
}

\put(0,-80){
\conffwd{0}{-8}{\passiv}{8}{6}{8}
\conffwd{23}{-8}{\passiv}{8}{4}{7}
\conffwd{46}{-8}{\passiv}{8}{7}{6}
\conffwd{69}{-8}{\passiv}{8}{3}{8}
\conffwd{92}{-8}{\passiv}{8}{1}{3}
\confcomp{115}{-8}{\activ_1}{8}{8}{7}
\conffwd{138}{-8}{\passiv}{8}{5}{6}
}

\put(0,-104){
\conffwd{0}{-8}{\passiv}{8}{6}{8}
\conffwd{23}{-8}{\passiv}{8}{4}{7}
\conffwd{46}{-8}{\passiv}{8}{7}{6}
\conffwd{69}{-8}{\passiv}{8}{3}{8}
\conffwd{92}{-8}{\passiv}{8}{1}{3}
\conffound{115}{-8}{\found}{8}{8}{8}
\conffwd{138}{-8}{\passiv}{8}{5}{6}
}


\node[Nframe=n,Nw=1,Nh=1](p0)(30,15){}
\node[Nframe=n,Nw=1,Nh=1](p1)(30,2){}
\node[Nframe=n,Nw=1,Nh=1](p2)(37,-2){}
\node[Nframe=n,Nw=1,Nh=1](p3)(37,-26){}
\node[Nframe=n,Nw=1,Nh=1](p4)(37,-43){}
\node[Nframe=n,Nw=1,Nh=1](p5)(106,-43){}
\node[Nframe=n,Nw=1,Nh=1](p6)(106,-66){}
\node[Nframe=n,Nw=1,Nh=1](p7)(4,-66){}

\gasset{linewidth=0.5,linecolor=violet}

\drawedge[curvedepth=0,ELpos=50,ELdist=0.5,linecolor=violet,dash={1}0](p0,p1){}
\drawedge[curvedepth=0,ELpos=65,ELdist=0.5,linecolor=violet](p1,p2){\textcolor{violet}{\scalebox{1}{\ding{202}}}}
\drawedge[curvedepth=0,ELpos=42,ELdist=0.5,ELside=r,linecolor=violet](p2,p3){\textcolor{violet}{\scalebox{1}{\ding{203}}}}
\drawedge[curvedepth=0,ELpos=54,ELdist=0.5,ELside=r,linecolor=violet](p3,p4){\textcolor{violet}{\scalebox{1}{\ding{204}}}}
\drawedge[curvedepth=0,ELpos=57,ELdist=0.5,linecolor=violet](p4,p5){\textcolor{violet}{\scalebox{1}{\ding{205}}}}
\drawedge[curvedepth=0,ELpos=40,ELdist=0.5,ELside=r,linecolor=violet](p5,p6){\textcolor{violet}{\scalebox{1}{\ding{206}}}}
\drawedge[curvedepth=0,ELpos=44,ELdist=0.6,ELside=r,linecolor=violet](p6,p7){\textcolor{violet}{\scalebox{1}{\ding{207}}}}

\node[Nframe=n,Nw=0.5,Nh=0.5](p0)(99,15){}
\node[Nframe=n,Nw=0.5,Nh=0.5](p1)(99,2){}
\node[Nframe=n,Nw=0.5,Nh=0.5](p2)(106,-2){}
\node[Nframe=n,Nw=0.5,Nh=0.5](p3)(106,-25){}
\node[Nframe=n,Nw=0.5,Nh=0.5](p4)(106,-37){}
\node[Nframe=n,Nw=0.5,Nh=0.5](p5)(-9,-37){}
\node[Nframe=n,Nw=0.5,Nh=0.5](p6)(-9,-65){}

\gasset{linewidth=0.5,linecolor=blue}

\drawedge[curvedepth=0,ELpos=50,ELdist=0.5,linecolor=blue,dash={1}0](p0,p1){}
\drawedge[curvedepth=0,ELpos=65,ELdist=0.5](p1,p2){\textcolor{blue}{\scalebox{1}{\ding{202}}}}
\drawedge[curvedepth=0,ELpos=43,ELside=r,ELdist=0.5](p2,p3){\textcolor{blue}{\scalebox{1}{\ding{203}}}}
\drawedge[curvedepth=0,ELpos=23,ELside=r,ELdist=0.5](p3,p4){\textcolor{blue}{\scalebox{1}{\ding{204}}}}
\drawedge[curvedepth=0,ELpos=42,ELside=r,ELdist=0.45](p4,p5){\textcolor{blue}{\scalebox{1}{\ding{205}}}}
\drawedge[curvedepth=0,ELpos=52,ELside=r,ELdist=0.5](p5,p6){\textcolor{blue}{\scalebox{1}{\ding{206}}}}


\unitlength=1.2mm
\put(-2,-103){
\node[Nframe=n,Nw=1,Nh=0](p1)(-3,1){\textcolor{violet}{$\pathaut_{r''}^{1}$:}}
\node[Nframe=n,Nw=1,Nh=0](p1)(2,0){}
\node[Nframe=n,Nw=1,Nh=0](p2)(10,0){}
\node[Nframe=n,Nw=1,Nh=0](p3)(20,0){}
\node[Nframe=n,Nw=1,Nh=0](p4)(30,0){}
\node[Nframe=n,Nw=1,Nh=0](p5)(40,0){}
\node[Nframe=n,Nw=1,Nh=0](p6)(50,0){}
\node[Nframe=n,Nw=1,Nh=0](p7)(60,0){}
\node[Nframe=n,Nw=1,Nh=0](p8)(70,0){}
\node[Nframe=n,Nw=1,Nh=0](p9)(80,0){}
\node[Nframe=n,Nw=1,Nh=0](p10)(90,0){}
\node[Nframe=n,Nw=1,Nh=0](p11)(100,0){}
\node[Nframe=n,Nw=1,Nh=0](p12)(110,0){}

\gasset{linewidth=0.3,linecolor=violet,AHangle=25}

\drawedge[curvedepth=0,ELside=r,ELpos=50,ELdist=0.5,linewidth=0,linecolor=violet,dash={1}0](p1,p2){}
\drawedge[curvedepth=0,ELside=r,ELpos=50,ELdist=0.5,linewidth=0,linecolor=violet](p2,p3){\textcolor{violet}{\scalebox{1}{\ding{202}}}}
\drawedge[curvedepth=0,ELside=r,ELpos=50,ELdist=0.5,linewidth=0,linecolor=violet](p4,p5){\textcolor{violet}{\scalebox{1}{\ding{203}}}}
\drawedge[curvedepth=0,ELside=r,ELpos=50,ELdist=0.5,linewidth=0,linecolor=violet](p7,p8){\textcolor{violet}{\scalebox{1}{\ding{204}}}}
\drawedge[curvedepth=0,ELside=r,ELpos=50,ELdist=0.5,linewidth=0,linecolor=violet](p8,p9){\textcolor{violet}{\scalebox{1}{\ding{205}}}}
\drawedge[curvedepth=0,ELside=r,ELpos=50,ELdist=0.5,linewidth=0,linecolor=violet](p10,p11){\textcolor{violet}{\scalebox{1}{\ding{206}}}}
\drawedge[curvedepth=0,ELside=r,ELpos=50,ELdist=0.5,linewidth=0,linecolor=violet](p11,p12){\textcolor{violet}{\scalebox{1}{\ding{207}}}}

\drawedge[curvedepth=0,ELpos=50,linecolor=violet,dash={1}0](p1,p2){}
\drawedge[curvedepth=0,ELpos=50,linecolor=violet](p2,p3){\textcolor{violet}{\scalebox{0.9}{$\msgform{r}{r'}$}}}
\drawedge[curvedepth=0,ELpos=50,linecolor=violet](p3,p4){\textcolor{violet}{\scalebox{0.9}{$\updform{r'}{r'}$}}}
\drawedge[curvedepth=0,ELpos=50,linecolor=violet](p4,p5){\textcolor{violet}{\scalebox{0.9}{$\nextform{r'}{r'}$}}}
\drawedge[curvedepth=0,ELpos=50,linecolor=violet](p5,p6){\textcolor{violet}{\scalebox{0.9}{$\localform{r'}{r'}$}}}
\drawedge[curvedepth=0,ELpos=50,linecolor=violet](p6,p7){\textcolor{violet}{\scalebox{0.9}{$\updform{r'}{r}$}}}
\drawedge[curvedepth=0,ELpos=50,linecolor=violet](p7,p8){\textcolor{violet}{\scalebox{0.9}{$\nextform{r}{r}$}}}
\drawedge[curvedepth=0,ELpos=50,linecolor=violet](p8,p9){\textcolor{violet}{\scalebox{0.9}{$\msgform{r}{r'}$}}}
\drawedge[curvedepth=0,ELpos=50,linecolor=violet](p9,p10){\textcolor{violet}{\scalebox{0.9}{$\updform{r'}{r'}$}}}
\drawedge[curvedepth=0,ELpos=50,linecolor=violet](p10,p11){\textcolor{violet}{\scalebox{0.9}{$\nextform{r'}{r'}$}}}
\drawedge[curvedepth=0,ELpos=50,linecolor=violet](p11,p12){\textcolor{violet}{\scalebox{0.9}{$\msgform{r'}{r''}$}}}
}


\unitlength=1.2mm
\put(-2,-95){
\node[Nframe=n,Nw=1,Nh=0](p1)(-3,1){\textcolor{blue}{$\pathaut_{r'}^{1}$:}}
\node[Nframe=n,Nw=1,Nh=0](p1)(2,0){}
\node[Nframe=n,Nw=1,Nh=0](p2)(10,0){}
\node[Nframe=n,Nw=1,Nh=0](p3)(20,0){}
\node[Nframe=n,Nw=1,Nh=0](p4)(30,0){}
\node[Nframe=n,Nw=1,Nh=0](p5)(40,0){}
\node[Nframe=n,Nw=1,Nh=0](p6)(50,0){}
\node[Nframe=n,Nw=1,Nh=0](p7)(60,0){}
\node[Nframe=n,Nw=1,Nh=0](p8)(70,0){}
\node[Nframe=n,Nw=1,Nh=0](p9)(80,0){}
\node[Nframe=n,Nw=1,Nh=0](p10)(90,0){}
\node[Nframe=n,Nw=1,Nh=0](p11)(100,0){}
\node[Nframe=n,Nw=1,Nh=0](p12)(110,0){}

\gasset{linewidth=0.3,linecolor=blue,AHangle=25}

\drawedge[curvedepth=0,ELside=r,ELpos=50,ELdist=0.5,linewidth=0,linecolor=violet,dash={1}0](p1,p2){}
\drawedge[curvedepth=0,ELside=r,ELpos=50,ELdist=0.5,linewidth=0,linecolor=violet](p2,p3){\textcolor{blue}{\scalebox{1}{\ding{202}}}}
\drawedge[curvedepth=0,ELside=r,ELpos=50,ELdist=0.5,linewidth=0,linecolor=violet](p4,p5){\textcolor{blue}{\scalebox{1}{\ding{203}}}}
\drawedge[curvedepth=0,ELside=r,ELpos=50,ELdist=0.5,linewidth=0,linecolor=violet](p7,p8){\textcolor{blue}{\scalebox{1}{\ding{204}}}}
\drawedge[curvedepth=0,ELside=r,ELpos=50,ELdist=0.5,linewidth=0,linecolor=violet](p8,p9){\textcolor{blue}{\scalebox{1}{\ding{205}}}}
\drawedge[curvedepth=0,ELside=r,ELpos=50,ELdist=0.5,linewidth=0,linecolor=violet](p10,p11){\textcolor{blue}{\scalebox{1}{\ding{206}}}}

\drawedge[curvedepth=0,ELpos=50,linecolor=blue,dash={1}0](p1,p2){}
\drawedge[curvedepth=0,ELpos=50,linecolor=blue](p2,p3){\textcolor{blue}{\scalebox{0.9}{$\msgform{r}{r'}$}}}
\drawedge[curvedepth=0,ELpos=50,linecolor=blue](p3,p4){\textcolor{blue}{\scalebox{0.9}{$\updform{r'}{r'}$}}}
\drawedge[curvedepth=0,ELpos=50,linecolor=blue](p4,p5){\textcolor{blue}{\scalebox{0.9}{$\nextform{r'}{r'}$}}}
\drawedge[curvedepth=0,ELpos=50,linecolor=blue](p5,p6){\textcolor{blue}{\scalebox{0.9}{$\localform{r'}{r'}$}}}
\drawedge[curvedepth=0,ELpos=50,linecolor=blue](p6,p7){\textcolor{blue}{\scalebox{0.9}{$\updform{r'}{r}$}}}
\drawedge[curvedepth=0,ELpos=50,linecolor=blue](p7,p8){\textcolor{blue}{\scalebox{0.9}{$\nextform{r}{r}$}}}
\drawedge[curvedepth=0,ELpos=50,linecolor=blue](p8,p9){\textcolor{blue}{\scalebox{0.9}{$\msgform{r}{r'}$}}}
\drawedge[curvedepth=0,ELpos=50,linecolor=blue](p9,p10){\textcolor{blue}{\scalebox{0.9}{$\updform{r'}{r'}$}}}
\drawedge[curvedepth=0,ELpos=50,linecolor=blue](p10,p11){\textcolor{blue}{\scalebox{0.9}{$\nextform{r'}{r'}$}}}
\drawedge[curvedepth=0,ELpos=50,linecolor=blue](p11,p12){\textcolor{blue}{\scalebox{0.9}{$\localform{r'}{r'}$}}}
}

\end{gpicture}


\vspace{-4ex}
\begin{abstract}
{\textbf{Abstract.}} We introduce an automata-theoretic method for the verification of distributed algorithms running on ring networks. In a distributed algorithm, an arbitrary number of processes cooperate to achieve a common goal (e.g., elect a leader). Processes have unique identifiers (pids) from an infinite, totally ordered domain. An algorithm proceeds in synchronous rounds, each round allowing a process to perform a bounded sequence of actions such as send or receive a pid, store it in some register, and compare register contents wrt.\ the associated total order. An algorithm is supposed to be correct independently of the number of processes. To specify correctness properties, we introduce a logic that can reason about processes and pids. Referring to leader election, it may say that, at the end of an execution, each process stores the maximum pid in some dedicated register. Since the verification of distributed algorithms is undecidable, we propose an underapproximation technique, which bounds the number of rounds. This is an appealing approach, as the number of rounds needed by a distributed algorithm to conclude is often exponentially smaller than the number of processes. We provide an automata-theoretic solution, reducing model checking to emptiness for alternating two-way automata on words. Overall, we show that round-bounded verification of distributed algorithms over rings is PSPACE-complete.
\end{abstract}

\pagestyle{fancy}
\fancyhead{}
\renewcommand{\headrulewidth}{0pt}
\fancyfoot[C]{\vspace{2ex}\thepage}


\section{Introduction}

Distributed algorithms are a classic discipline of computer science and continue to be an active field of research \cite{Lynch:1996,Fokkink2013}. A distributed algorithm employs several processes, which perform one and the same program to achieve a common goal. It is required to be correct independently of the number of processes. Prominent examples are leader-election algorithms, whose task is to determine a unique leader process and to announce it to all other processes. Those algorithms are often studied for ring architectures. One practical motivation comes from local-area networks that are based on a token-ring protocol. Moreover, rings generally allow one to nicely illustrate the main conceptual ideas of an algorithm.

However, it is well-known that there is no (deterministic) distributed algorithm over rings that elects a leader under the assumption of anonymous processes. Therefore, classical algorithms, such as Franklin's algorithm \cite{Franklin:1982} or the Dolev-Klawe-Rodeh algorithm \cite{DolevKR82}, assume that every process is equipped with a unique process identifier (pid) from an infinite, totally ordered domain.
In this paper, we consider such distributed algorithms, which work on ring architectures and can access unique pids as well as the associated total order.

Distributed algorithms are intrinsically hard to analyze. Correctness proofs are often intricate and use subtle inductive arguments. Therefore, it is worthwhile to consider automatic verification methods such as model checking \cite{ClarkeGP2001}. Besides a formal model of an algorithm, this requires a generic specification language that is feasible from an algorithmic point of view but expressive enough to formulate correctness properties. In this paper, we propose a language that can reason about processes, states, and pids. In particular, it will allow us to formalize when a leader-election algorithm is correct: \emph{At the end of an execution, every process stores, in register $r$, the maximum pid among all processes}.
Our language is inspired by Data-XPath, which can reason about trees over infinite alphabets \cite{BenediktFG08,BojanczykMSS09,FS11}.

However, formal verification of distributed algorithms cumulates various difficulties that already arise, separately, in more standard verification: First, the number of processes is unknown, which amounts to parameterized verification \cite{Esparza14}; second, processes manipulate data from an infinite domain \cite{BojanczykMSS09,FS11}. In each case, even simple verification questions are undecidable, and so is the combination of both.

In various other contexts, a successful approach to retrieving decidability has been a form of \emph{bounded model checking}. The idea is to consider correctness up to some parameter, which restricts the set of runs of the algorithm in a non-trivial way. In multi-threaded recursive programs, for example, one may restrict the number of control switches between different threads \cite{Qadeer:TACAS05}. Actually, this idea seems even more natural in the context of distributed algorithms, which usually proceed in \emph{rounds}. In each round, a process may emit some messages (here: pids) to its neighbors, and then receive messages from its neighbors. Pids can be stored in registers, and a process can check the relation between stored pids before it moves to a new state and is ready for a new round. It turns out that the number of rounds is often exponentially smaller than the number of processes (cf.\ the above-mentioned leader-election algorithms). Thus, roughly speaking, a small number of rounds allows us to verify correctness of an algorithm for a large number of processes.

The key idea of our method is to interpret a (round-bounded) execution of a distributed algorithm symbolically as a word-like structure over a finite alphabet. The finite alphabet is constituted by the transitions that occur in the algorithm and possibly contain tests of pids wrt.\ equality or the associated total order. To determine feasibility of a symbolic execution (i.e., \emph{is there a ring that satisfies all the guards employed?}), we use propositional dynamic logic with loop and converse (LCPDL) over words \cite{Goeller2009}. Basically, we translate a given distributed algorithm into a formula that detects cyclic (i.e., contradictory) smaller-than tests. Its models are precisely the feasible symbolic executions. A specification is translated into LCPDL as well so that verification amounts to checking satisfiability of a single formula. The latter can be reduced to an emptiness problem for alternating two-way automata over words so that we obtain a PSPACE procedure for round-bounded model checking.

\paragraph{Related Work.}

Considerable effort has been devoted to the verification of fault-tolerant algorithms, which have to cope with faults such as lost or corrupted messages (e.g., \cite{Merz:2009,KonnovVW14}). After all, there have been only very few generic approaches to model checking distributed algorithms. In \cite{KVW12}, several possible reasons for this are identified, among them the presence of unbounded data types and an unbounded number of processes, which we have to treat simultaneously in our framework.  Parameterized model checking of ring-based systems where communication is subject to a token policy and the message alphabet is finite has been studied in \cite{EmersonN03,AminofJKR14}.

The theory of words and trees over infinite alphabets (aka data words/trees) provides an elegant formal framework for database-related notions such as XML documents \cite{BojanczykMSS09}, or for the analysis of programs with data structures such as lists and arrays \cite{Alur:2011,Alur:2012}. Notably, streaming transducers \cite{Alur:2011} also work over an infinite, totally ordered domain. The difference to our work is that we model distributed algorithms and provide a logical specification language. Recall that the latter borrows concepts from \cite{BenediktFG08,BojanczykMSS09,FS11}, whose logic is designed to reason about XML documents. A fragment of MSO logic over \emph{ordered} data trees was studied in \cite{Tan14}. The paper \cite{BCGK-fossacs12} pursued a symbolic model-checking approach to systems involving data. But the model was purely sequential and pids could only be compared for equality. The ordering on the data domain actually has a subtle impact on the choice of the specification language.

\paragraph{Outline.}

In Section,~\ref{sec:algorithms}, we present our model of a distributed algorithm.
Section~\ref{sec:spec} introduces the specification language to express correctness criteria. In Section~\ref{sec:verification}, we show how to solve the round-bounded model-checking problem  in polynomial space. We conclude in Section~\ref{sec:conclusion}.
Some proof details are omitted but can be found in the appendix.



\section{Distributed Algorithms}\label{sec:algorithms}

\newcommand{\lef}{\mathbf{left}}
\newcommand{\rig}{\mathbf{right}}
\newcommand{\idsend}{\mathit{snd}}
\newcommand{\idrec}{\mathit{rec}}
\newcommand{\update}{\mathit{upd}}
\newcommand{\idguard}{\mathit{grd}}
\newcommand{\ring}{\mathcal{R}}
\newcommand{\Pos}[2]{\mathit{Pos}(#1,#2)}
\newcommand{\Coord}[1]{\mathit{Pos}(#1)}

By $\N = \{0,1,2,\ldots\}$, we denote the set of natural numbers. For $n \in \N$, we set $\set{n} = \{1,\ldots,n\}$ and $\setz{n} = \{0,1,\ldots,n\}$.
The set of finite words over an alphabet $A$ is denoted by $A^\ast$, and the set of nonempty finite words by $A^+$.

\paragraph{Syntax of Distributed Algorithms.}

We consider distributed algorithms that run on arbitrary ring architectures. A ring consists of a finite number of processes, each having a unique process identifier (pid). Every process has a unique left neighbor (referred to by $\lef$) and a unique right neighbor (referred to by $\rig$). Formally, a \emph{ring} is a tuple $\ring=(n:p_1,\ldots,p_n)$, given by its size $n \ge 1$ and the pids $p_i \in \N$ assigned to process $i \in \set{n}$. We require that pids are unique, i.e., $p_i \neq p_j$ whenever $i \neq j$. For a process $i < n$, process $i+1$ is the right neighbor of $i$. Moreover, $1$ is the right neighbor of $n$. Analogously, if $i > 2$, then $i-1$ is the left neighbor of $i$. Moreover, $n$ is the left neighbor of $1$. Thus, processes $1$ and $n$ must not be considered as the ``first'' or ``last'' process. Actually, a distributed algorithm will not be able to distinguish between, for example, $(4:4,1,5,2)$ and $(4:5,2,4,1)$.

One given distributed algorithm can be run on \emph{any} ring. It is given by a single program $\DA$, and each process runs a copy of $\DA$. It is convenient to think of $\DA$ as a (finite) automaton. Processes proceed in synchronous rounds. In one round, every process executes one transition of its program. In addition to the change of state, a process may optionally perform the following phases within a transition: (i) send some pids to its neighbors, (ii) receive pids from its neighbors and store them in registers, (iii) compare register contents with one another, (iv) update its registers. For example, consider the transition $t=\datrans{s}{\sendleft{r} \nextcmd \sendright{r'}}{\recright{r'}}{r<r'}{\updcmd{r}{r'}}{s'}$. A process can execute $t$ if it is in state $s$. It then sends the contents of register $r$ to its left neighbor and the contents of $r'$ to its right neighbor. If, afterwards, it receives a pid $p$ from its right neighbor, it stores $p$ in $r'$. If $p$ is greater than what has been stored in $r$, it sets $r$ to $p$ and goes to state $s'$. Otherwise, the transition is not applicable.
The first phase can, alternatively, be filled with a special command $\fwdcmd$. Then, a process will just forward any pid it receives. Note that a message can be forwarded, in one and the same round, across several processes executing $\fwdcmd$.

\begin{definition}\label{def:da}
A \emph{distributed algorithm} $\DA=(\States,\init,\Reg,\Trans)$ consists of
a nonempty finite set $\States$ of \emph{(local) states},
an \emph{initial state} $\init \in \States$,
a nonempty finite set $\Reg$ of \emph{registers}, and
a nonempty finite set $\Trans$ of \emph{transitions}. 
A transition is of the form
$\datrans{s}{\mathit{send}}{\mathit{rec}}{\mathit{guard}}{\mathit{update}}{s'}$
where $s,s' \in \States$ and the components $\mathit{send}$, $\mathit{rec}$, $\mathit{guard}$, and $\mathit{update}$ are built as follows: 
\begin{itemize}\itemsep=0.4ex
\item[] $\mathit{send} ~::=~ \skipcmd ~\mid~ \fwdcmd ~\mid~ \sendleft{r} ~\mid~ \sendright{r} ~\mid~ \sendleft{r}\nextcmd \sendright{r'}$

\item[] $\mathit{rec} ~::=~ \skipcmd ~\mid~ \recleft{r} ~\mid~ \recright{r} ~\mid~ \recleft{r}\nextcmd \recright{r'}$

\item[] $\mathit{guard} ~::=~ \skipcmd ~\mid~ r < r' ~\mid~ r = r' ~\mid~ \mathit{guard} \nextcmd \mathit{guard}$
\item[] $\mathit{update} ~::=~ \skipcmd ~\mid~ \updcmd{r}{r'} ~\mid~ \mathit{update}\nextcmd\mathit{update}$
\end{itemize}
with $r$ and $r'$ ranging over $\Reg$. We require that
\begin{itemize}
\item[(1)] in a $\mathit{rec}$ statement of the form $\recleft{r}\nextcmd \recright{r'}$, we have $r \neq r'$ (actually, the order of the two receive actions does not matter), and
\item[(2)] in an $\mathit{update}$ statement, every register occurs at most once as a left-hand side.
\end{itemize}
In the following, occurrences of ``$\skipcmd\,\text{;}$'' are omitted; this does not affect the semantics.
\myqed
\end{definition}

Note that a guard $r \le r'$ can be simulated in terms of guards $r < r'$ and $r = r'$, using several transitions. We separate $<$ and $=$ for convenience. They are actually quite different in nature, as we will see later in the proof of our main result.

At the beginning of an execution of an algorithm, every register contains the pid of the respective process. We also assume, wlog., that there is a special register $\idreg \in \Reg$ that is never updated, i.e., no transition contains a command of the form $\recleft{\idreg}$, $\recright{\idreg}$, or $\updcmd{\idreg}{r}$. A process can thus, at any time, access its own pid in terms of $\idreg$.

In the semantics, we will suppose that all updates of a transition happen simultaneously, i.e., after executing $\updcmd{r}{r'} \nextcmd \updcmd{r'}{r}$, the values previously stored in $r$ and $r'$ will be swapped (and do not necessarily coincide).
As, moreover, the order of two sends and the order of two receives within a 
transition do not matter, this will allow us to identify a transition with 
the set of states, commands (apart from $\skipcmd$), and guards that it contains. For example, $t=\datrans{s}{\sendleft{r} \nextcmd \sendright{r'}}{\recright{r'}}{r<r'}{\updcmd{r}{r'}}{s'}$ is considered as the set $t=\{s\,,\,\sendleft{r}\,,\,\sendright{r'}\,,\,\recright{r'}\,,\,r<r'\,,\,\updcmd{r}{r'}\,,\,\gotocmd{s'}\}$.


\begin{figure}[t]
\centering
\parbox{\textwidth}{
\scalebox{0.85}{
$\begin{array}{lcl}
\textbf{states: } \activ,\passiv & & t_1 = \langle \activ\textup{:}~\sendleft{\idreg} \nextcmd \sendright{\idreg} \nextcmd \recleft{r_1} \nextcmd \recright{r_2} \nextcmd r_1 < \idreg \nextcmd r_2 < \idreg \nextcmd \gotocmd{\activ}\rangle\\[0.5ex]

\phantom{\textbf{states: }} {\found} & & t_2 = \langle \activ\textup{:}~\rule{\widthof{$\sendleft{\idreg} \nextcmd \sendright{\idreg} \nextcmd \recleft{r_1} \nextcmd \recright{r_2}$}}{0.4pt} \nextcmd \idreg < r_1 \nextcmd \gotocmd{\passiv}\rangle\\[0.5ex]

\textbf{initial state: } \activ & & t_3 = \langle \activ\textup{:}~\rule{\widthof{$\sendleft{\idreg} \nextcmd \sendright{\idreg} \nextcmd \recleft{r_1} \nextcmd \recright{r_2}$}}{0.4pt} \nextcmd \idreg < r_2 \nextcmd \gotocmd{\passiv}\rangle\\[0.5ex]

\textbf{registers: } \idreg,r,r_1,r_2 & & t_4 = \langle \activ\textup{:}~\rule{\widthof{$\sendleft{\idreg} \nextcmd \sendright{\idreg} \nextcmd \recleft{r_1} \nextcmd \recright{r_2}$}}{0.4pt} \nextcmd \idreg = r_1 \nextcmd \updcmd{r}{\idreg} \nextcmd \gotocmd{\found}\rangle\\[0.5ex]

& & t_5 = \langle \passiv\textup{:}~\fwdcmd \nextcmd \recleft{r}\nextcmd \gotocmd{\passiv}\rangle
\end{array}$
}}
\caption{Franklin's leader-election algorithm $\DA_\mathsf{Franklin}$\label{fig:franklin}}
\hspace{3ex}
\centering
\parbox{\textwidth}{
\scalebox{0.85}{
$\begin{array}{lcl}
\textbf{states: } \activ_0,\activ_1 & & t_1 = \langle \activ_0\textup{:}~\sendright{r} \nextcmd \recleft{r'} \nextcmd \gotocmd{\activ_1}\rangle\\[0.5ex]

\phantom{\textbf{states: }} {\passiv,\found} & & t_2 = \langle \activ_1\textup{:}~\sendright{r'} \nextcmd \recleft{r''} \nextcmd r'' < r' \nextcmd r < r' \nextcmd \updcmd{r}{r'} \nextcmd \gotocmd{\activ_0}\rangle\\[0.5ex]

\textbf{initial state: } \activ_0 & & t_3 = \langle \activ_1\textup{:}~\rule{\widthof{$\sendright{r'} \nextcmd \recleft{r''}$}}{0.4pt} \nextcmd r' < r \nextcmd \gotocmd{\passiv}\rangle\\[0.5ex]

\textbf{registers: } \idreg,r,r',r'' & & t_4 = \langle \activ_1\textup{:}~\rule{\widthof{$\sendright{r'} \nextcmd \recleft{r''}$}}{0.4pt} \nextcmd r' < r'' \nextcmd \gotocmd{\passiv}\rangle\\[0.5ex]

& & t_5 = \langle \activ_1\textup{:}~\rule{\widthof{$\sendright{r'} \nextcmd \recleft{r''}$}}{0.4pt} \nextcmd r = r' \nextcmd \gotocmd{\found}\rangle\\[0.5ex]

\phantom{\textbf{states: } \activ,\passiv,\found} & & t_6 = \langle \passiv\textup{:}~\fwdcmd \nextcmd \recleft{r}\nextcmd \gotocmd{\passiv}\rangle
\end{array}$
}}
\caption{Dolev-Klawe-Rodeh leader-election algorithm $\DA_\mathsf{DKR}$\label{fig:dkr}}
\end{figure}

Before defining the semantics of a distributed algorithm, we will look at two examples.

\begin{example}[Franklin's Leader-Election Algorithm]\label{ex:franklin}
Consider Franklin's algorithm $\DA_\mathsf{Franklin}$ to determine a leader in a ring \cite{Franklin:1982}. It is given in Figure~\ref{fig:franklin}.
The goal is to assign leadership to the process with the highest pid. To do so, every process sends its own pid to both neighbors, receives the pids of its left and right neighbor, and stores them in registers $r_1$ and $r_2$, respectively (transitions $t_1,\ldots,t_4$). If a process is a local maximum, i.e., $r_1 < \idreg$ and $r_2 < \idreg$ hold, it is still in the race for leadership and stays in state $\activ$. Otherwise, it has to take $t_2$ or $t_3$ and goes into state $\passiv$. 
In $\passiv$, a process will just forward any pid it receives and store the 
message coming from the left in $r$ (transition $t_5$).
When an active process receives its own pid (transition $t_4$), it knows it is the only remaining active process. It copies its own pid into $r$, which henceforth refers to the leader. 
We may say that a run is accepting (or terminating) when all processes terminate in $\passiv$ or $\found$. Then, at the end of any accepting run, (i) there is exactly one process $i_0$ that terminates in $\found$, (ii) all processes store the pid of $i_0$ in register $r$, and the pid of $i_0$ is the maximum of all pids in the ring. Since, in every round, at least half of the active processes become passive, the algorithm terminates after at most $\lfloor \log_2 n\rfloor +1$ rounds where $n$ is the number of processes.
\myqed
\end{example}

\begin{example}[Dolev-Klawe-Rodeh Leader-Election Algorithm]\label{ex:dolev}
The Dolev-Klawe-Rodeh leader-election algorithm \cite{DolevKR82} is an
adaptation of Franklin's algorithm to cope with unidirectional rings, where a
process can only, say, send to the right and receive from the left.  The
algorithm, denoted $\DA_\mathsf{DKR}$, is given in Figure~\ref{fig:dkr}.
The idea is that the local maximum among the processes $i-2,i-1,i$ is determined
by $i$ (rather than $i-1$). Therefore, each process $i$ will execute two
transitions, namely $t_1$ and $t_2$, and store the pids sent by $i-2$ and $i-1$ in
$r''$ and $r'$, respectively. After two rounds, since $r$ still contains the pid of
$i$ itself, $i$ can test if $i-1$ is a local maximum among $i-2,i-1,i$ using the guards in
transition $t_2$.  If both guards are satisfied, $i$ stores the pid sent by $i-1$ 
in $r$. It henceforth ''represents'' process $i-1$, which is still in the race, and goes to state $\activ_0$.
Otherwise, it enters $\passiv$, which has the same task as in Franklin's algorithm.  The algorithm is correct in the following sense: At the end of an accepting run (each process ends in $\passiv$ or
$\found$), (i) there is exactly one process that terminates in $\found$ (but not
necessarily the one with the highest pid), and (ii) all processes store the
maximal pid in register $r$.  The algorithm terminates after at most $2\lfloor
\log_2 n\rfloor+2$ rounds.  Note that the correctness of $\DA_\mathsf{DKR}$ is
less clear than that of $\DA_\mathsf{Franklin}$.  \myqed
\end{example}

\newcommand{\height}{\parbox[0pt][3.6ex][c]{0cm}{}}
\makeatletter
\def\hlinewd#1{%
  \noalign{\ifnum0=`}\fi\hrule \@height #1 \futurelet
   \reserved@a\@xhline}
\makeatother


\newcommand{\regmap}{\rho}
\newcommand{\run}{\chi}
\newcommand{\confrel}[1]{\stackrel{#1}{\rightsquigarrow}}

\paragraph{Semantics of Distributed Algorithms.}

Now, we give the formal semantics of a distributed algorithm $\DA=(\States,\init,\Reg,\Trans)$. Recall that $\DA$ can be run on any ring $\ring=(n:p_1,\ldots,p_n)$. An ($\ring$-)configuration of $\DA$ is a tuple $(s_1,\ldots,s_n,\regmap_1,\ldots,\regmap_n)$ where $s_i$ is the current state of process $i$ and $\regmap_i: \Reg \to \{p_1,\ldots,p_n\}$ maps each register to a pid. The configuration is called \emph{initial} if, for all processes $i \in \set{n}$, we have $s_i = \init$ and $\regmap_i(r) = p_i$ for all $r \in \Reg$.
Note that there is a unique initial $\ring$-configuration.

In one round, the algorithm moves from one configuration to another
one.  This is described by a relation $\conf \confrel{t}
\conf'$ where $\conf=(s_1,\ldots,s_n,\regmap_1,\ldots,\regmap_n)$ and
$\conf'=(s_1',\ldots,s_n',\regmap_1',\ldots,\regmap_n')$ are
$\ring$-configurations and $t = (t_1,\ldots,t_n) \in \Trans^n$ is a tuple of
transitions where $t_i$ is executed by process $i$. To determine when $\conf
\confrel{t} \conf'$ holds, we first define two auxiliary relations.  For
registers $r,r' \in \Reg$ and processes $i,j \in \set{n}$, we write
$\auxright{r}{i}{r'}{j}$ if the contents of $r$ is sent to the right from $i$ to $j$, where
it is stored in $r'$.  Thus, we require that
\begin{center}
$\sendright{r} \in t_i ~\wedge~ \recleft{r'} \in t_j ~\wedge~ \fwdcmd \in t_k$ for all $k \in \mathit{Between}(i,j)$
\end{center}
where $\mathit{Between}(i,j)$ means $\{i+1,\ldots,j-1\}$ if $i<j$ or 
$\{1,\ldots,j-1,i+1,\ldots,n\}$ if $j\leq i$.
Note that, due to the $\fwdcmd$ command, $\auxright{r}{i}{r'}{j}$ may hold for several $r'$ and $j$.
The meaning of $\auxleft{r}{i}{r'}{j}$ is analogous, we just replace ``right direction'' by ``left direction'':
\begin{center}
$\sendleft{r} \in t_i ~\wedge~ \recright{r'} \in t_j ~\wedge~ \fwdcmd \in t_k$ for all $k \in \mathit{Between}(j,i)$.
\end{center}

\begin{figure}[t]
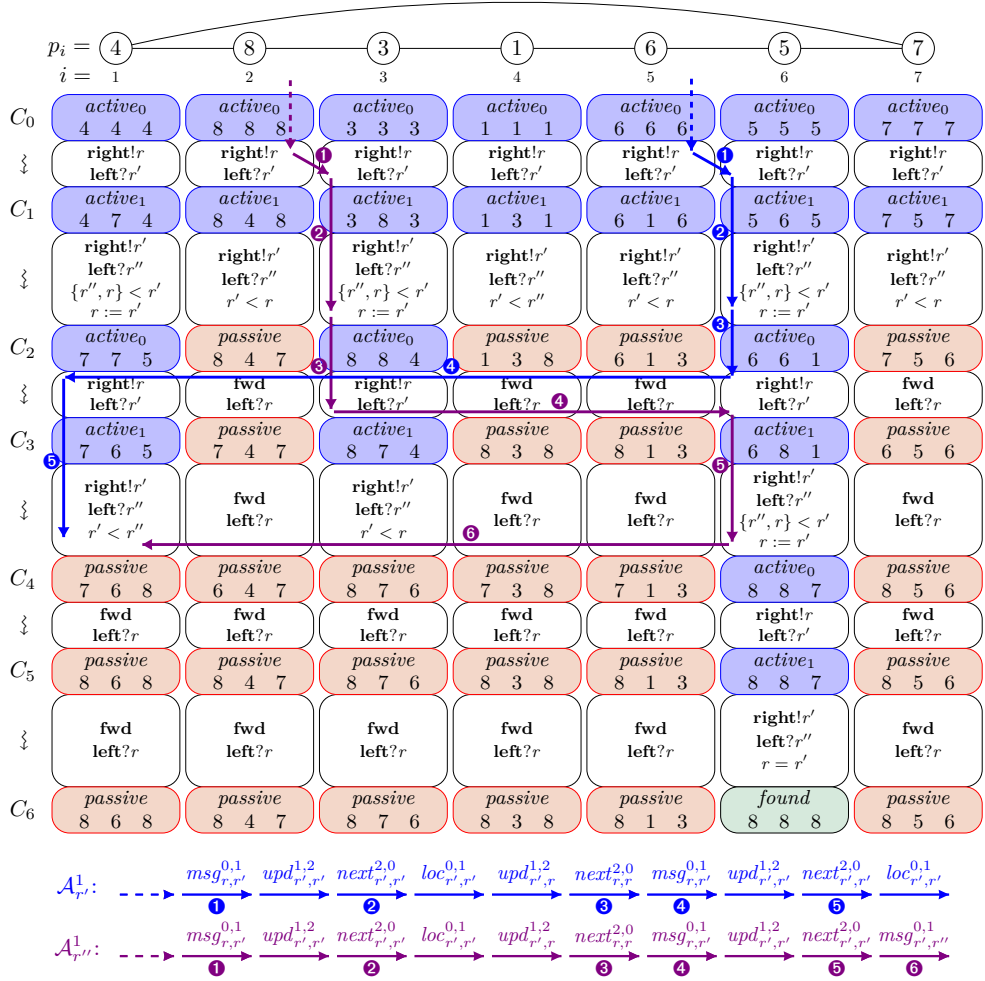

\centering
\scalebox{0.85}{
\gusepicture{symbrun}
}
\caption{Run of Dolev-Klawe-Rodeh algorithm and runs of path automata\label{fig:symbrun}}
\end{figure}

The guards in the transitions $t_1,\ldots,t_n$ are checked against ``intermediate'' register assignments $\hat{\regmap}_1,\ldots,\hat{\regmap}_n: \Reg \to \{p_1,\ldots,p_n\}$, which are defined as follows:
$$\hat{\regmap}_j(r') =
\begin{cases}
\regmap_i(r) & \text{ if } \auxright{r}{i}{r'}{j} \text{ or } \auxleft{r}{i}{r'}{j} \\
\regmap_j(r') & \text{ if, for all } r,i\text{, neither } \auxright{r}{i}{r'}{j} \text{ nor } \auxleft{r}{i}{r'}{j}
\end{cases}$$
Note that this is well-defined, due to condition (1) in Definition~\ref{def:da}.

Now, we write $\conf \confrel{t} \conf'$ if, for all $j \in \set{n}$ and $r,r' \in \Reg$, the following hold:
\begin{enumerate}\itemsep=1ex
\item $s_j \in t_j$ and $(\gotocmd{s_j'}) \in t_j$,
\item $\hat{\regmap}_j(r) < \hat{\regmap}_j(r')$ ~~if $(r<r') \in {t_j}$,
\item $\hat{\regmap}_j(r) = \hat{\regmap}_j(r')$ ~~if $(r=r') \in {t_j}$,
\item $\regmap_j'(r) =
\begin{cases}
\hat{\regmap}_j(r') & \text{ if } (\updcmd{r}{r'}) \in t_j \\
\hat{\regmap}_j(r) & \text{ if } t_j \text{ does not contain an update of the form } \updcmd{r}{r''} \\
\end{cases}$
\end{enumerate}
Again, 4.\ is well-defined thanks to condition (2) in Definition~\ref{def:da}.


An ($\ring$-)\emph{run} of $\DA$ is a sequence $\run = {\conf_0 \confrel{t^1} \conf_1 \confrel{t^2} \ldots \confrel{t^k} \conf_k}$ where $k \ge 1$, $\conf_0$ is the initial $\ring$-configuration,
and $t^j = (t_{1}^j,\ldots,t_{n}^j) \in \Trans^n$ for all $j \in \set{k}$.
We call $k$ the \emph{length} of $\run$. Note that $\run$ uniquely determines the underlying ring $\ring$.

\begin{myremark}
A receive command is always non-blocking even if there is no corresponding send.
As an alternative semantics, one could require that it can only be executed if there has been a matching send, or vice versa. One could even include tags from a finite alphabet that can be sent along with pids. All this will not change any of the forthcoming results.
\myqed
\end{myremark}

\begin{example}
A run of $\DA_{\mathsf{DKR}}$ from Example~\ref{ex:dolev} on the ring $\ring=(7:4,8,3,1,6,5,7)$ is depicted in Figure~\ref{fig:symbrun} (for the moment, we may ignore the blue and violet lines). A colored row forms a configuration. The three pids in a cell refer to registers $r,r',r''$, respectively (we ignore $\idreg$). Moreover, a non-colored row forms, together with the states above and below, a transition tuple.
When looking at the step from $\conf_3$ to $\conf_4$, we have, for example, $\auxright{r'}{3}{r}{4}$ and $\auxright{r'}{3}{r''}{6}$. Moreover, $\auxright{r'}{6}{r}{7}$ and $\auxright{r'}{6}{r''}{1}$ (recall that we are in a ring). Note that the run conforms to the correctness property formulated in Example~\ref{ex:dolev}. In particular, in the final configuration, all processes store the maximum pid in register $r$.
\myqed
\end{example}


\section{The Specification Language}\label{sec:spec}

\newcommand{\existsp}[5]{\langle #3 \rangle#1 #5 \langle #4 \rangle#2}
\newcommand{\forallp}[5]{[#3]#1 #5 [#4]#2}

\newcommand{\existseq}[4]{\existsp{#1}{#2}{#3}{#4}{=}}
\newcommand{\existsleq}[4]{\existsp{#1}{#2}{#3}{#4}{\le}}
\newcommand{\existsless}[4]{\existsp{#1}{#2}{#3}{#4}{<}}
\newcommand{\foralleq}[4]{[#1@#3 = #2@#4]}
\newcommand{\forallleq}[4]{[#1@#3 \le #2@#4]}
\newcommand{\forallless}[4]{[#1@#3 < #2@#4]}
\newcommand{\Allrings}[1]{\forall_{\!\mathit{rings}}\forall_{\!\mathit{runs}}\forall_{\marked}#1}
\newcommand{\Allexistsrings}[1]{\forall_{\!\mathit{rings}}\exists_{\mathit{run}}\forall_{\marked}#1}

\newcommand{\marked}{\mathsf{m}}
\newcommand{\DataPDL}{\ensuremath{\textup{DataPDL}}\xspace}
\newcommand{\DataPDLm}{\ensuremath{\textup{DataPDL}^{\ominus}}\xspace}

In Examples~\ref{ex:franklin} and \ref{ex:dolev}, we informally stated the correctness criterion for the presented algorithms (e.g., ``at the end, all processes store the maximal pid in register $r$'').
Now, we introduce a \emph{formal} language to specify correctness properties. It is defined wrt.\ a given distributed algorithm $\DA=(\States,\init,\Reg,\Trans)$, which we fix for the rest of this section.

Typically, one requires that a distributed algorithm is correct no matter what the underlying ring is. Since we will bound the number of rounds, we moreover study a form of partial correctness.
Accordingly, a property is of the form $\Allrings{\locform}$, which has to be read as ``for all rings, all runs, and all processes $\marked$, we have $\locform$''. The marking $\marked$ is used to avoid to ``get lost'' in a ring when writing the property $\locform$. This is like placing a pebble in the ring that can be retrieved at any time. Actually, $\locform$ allows us to ``navigate'' back and forth ($\goup$ and $\godown$) in a run, i.e., from one configuration to the previous or next one (similar to a temporal logic with past operators). By means of $\goleft$ and $\goright$, we may also navigate horizontally within a configuration, i.e., from one process to a neighboring one.

Essentially, a sequence of configurations is interpreted as a cylinder (cf.\
Figure~\ref{fig:symbrun}) that can be explored using regular expressions $\pi$
over $\{\stay,\goleft,\goright,\goup,\godown\}$ (where $\stay$ means ``stay'').
At a given position/coordinate of the cylinder, we can check \emph{local (or positional)} properties like
the state taken by a process, or whether we are on the marked process
$\marked$. Such a property can be combined with a regular expression $\pi$:
The formula $\forallpath{\pi}\locform$ says that $\locform$ holds at every position
that is reachable through a $\pi$-path (a path matching $\pi$).
Dually, $\existspath{\pi}\locform$ holds if there is a $\pi$-path to some position where $\locform$ is satisfied.
The most interesting construct in our logic is $\existsp{r}{r'}{\pi}{\pi'}{\bowtie}$, where ${\bowtie} \in
\{\mathord{=},\mathord{\neq},\mathord{<},\mathord{\le}\}$, which has been
used for reasoning about XML documents \cite{BenediktFG08,BojanczykMSS09,FS11}.
It says that, from the current position, there are a $\pi$-path and a $\pi'$-path
that lead to positions $y$ and $y'$, respectively, such that the pid stored in
register $r$ at $y$ and the pid stored in $r'$ at $y'$ satisfy the relation
$\bowtie$.

We will now introduce our logic in full generality. Later, we will restrict the use of $<$- and $\le$-guards to obtain positive results.

\begin{definition}\label{def:datapdl}
The logic $\DataPDL(\DA)$ is given by the following grammar:
\begin{align*}
\Phi &::= \Allrings{\locform}\\
\locform,\locform' &::= \marked \,\mid\, s \,\mid\, \neg\locform \,\mid\, \locform \wedge \locform' \,\mid\, \locform \Rightarrow \locform' \,\mid\, \forallpath{\pi}\locform \,\mid\, \existsp{r}{r'}{\pi}{\pi'}{\bowtie}\\
  \pi,\pi' &::= \test{\locform} \,\mid\, d \,\mid\, \pi + \pi' \,\mid\, \pi \cdot \pi' \,\mid\, \pi^{\ast}
\end{align*}
where $s \in \States$, $r,r' \in \Reg$, ${\bowtie} \in \{\mathord{=},\mathord{\neq},\mathord{<},\mathord{\le}\}$, and $d \in \{\stay,\goleft,\goright,\goup,\godown\}$.
\myqed
\end{definition}

We call $\locform$ a \emph{local formula}, and $\pi$ a \emph{path formula}. We use common abbreviations such as $\false = \marked \wedge \neg\marked$, $\existspath{\pi}\locform = \neg\forallpath{\pi}\neg\phi$, and $\locform \vee \locform' = \neg(\neg\locform \wedge \neg\locform')$, and we may write $\pi\pi'$ instead of $\pi \cdot \pi'$.
Implication $\Rightarrow$ is included explicitly in view of the restriction defined below.

Next, we define the semantics. 
Consider a run $\run = {\conf_0 \confrel{t^1} \conf_1 \confrel{t^2} \ldots \confrel{t^k} \conf_k}$ of $\DA$ where $\conf_j = (s_1^j,\ldots,s_n^j,\regmap_1^j,\ldots,\regmap_n^j)$, i.e., $n$ is the number of processes in the underlying ring.
A local formula $\locform$ is interpreted over $\run$ wrt.\ a marked process $m \in \set{n}$ and a position $(i,j) \in \Coord{\run}$ where $\Coord{\run} = \set{n} \times \setz{k}$.
Let us define when $\run,m,(i,j) \models \locform$ holds.
The operators $\neg$, $\wedge$, and $\Rightarrow$ are as usual. Moreover, $\run,m,(i,j) \models \marked$ if $i = m$, and $\run,m,(i,j) \models s$ if $s_i^j = s$.

The other local formulas use path formulas. The semantics of a path formula $\pi$ is given in terms of a binary relation $\Sem{\pi}{\run,m} \subseteq \Coord{\run} \times \Coord{\run}$, which we define below. First, we set:
\begin{itemize}\itemsep=1ex
\item $\run,m,(i,j) \models \forallpath{\pi}\locform$ if $\forall (i',j')$ such that $((i,j),(i',j')) \in \Sem{\pi}{\run,m}$, we have $\run,m,(i',j') \models \locform$
\item $\run,m,(i,j) \models \existsp{r}{r'}{\pi}{\pi'}{\bowtie}$ (where ${\bowtie} \in \{=,\neq,\mathord{<},\mathord{\le}\}$) if $\exists (i_1,j_1),(i_2,j_2)$ such that $((i,j),(i_1,j_1)) \in \Sem{\pi}{\run,m}$ and $((i,j),(i_2,j_2)) \in \Sem{\pi'}{\run,m}$ and $\regmap_{i_1}^{j_1}(r) \bowtie \regmap_{i_2}^{j_2}(r')$
\end{itemize}

It remains to define $\Sem{\pi}{\run,m}$
for a path formula $\pi$.  First, a local test and a stay $\stay$ do not
``move'' at all: $\Sem{\test{\locform}}{\run,m} = \{(x,x) \mid x \in
\Coord{\run}$ such that $\run,m,x \models \locform\}$, and $\Sem{\stay}{\run,m}
= \{(x,x) \mid x \in \Coord{\run}\}$.  Using $\goright$, we
move to the right neighbor of a process: $\Sem{\goright}{\run,m} =
\{((i,j),(i+1,j)) \mid i \in \set{n-1}$ and $j \in \setz{k}\} \cup
\{((n,j),(1,j)) \mid j \in \setz{k}\}$.  We define $\Sem{\goleft}{\run,m}$
accordingly.  Moreover, $\Sem{\godown}{\run,m} = \{((i,j),(i,j+1)) \mid i \in
\set{n}$ and $j \in \setz{k-1}\}$, and similarly for
$\Sem{\goup}{\run,m}$.  The regular constructs, $+$, $\cdot$, and $\ast$ are
as expected and refer to the union, relation composition, and star over binary
relations.

Finally, $\DA$ satisfies the \DataPDL formula $\Allrings{\locform}$, written $\DA \models \Allrings{\locform}$, if, for all rings $\ring=(n:\ldots)$, all $\ring$-runs $\run$,
and all processes $m \in \set{n}$, we have $\run,m,(m,0) \models \locform$. Thus, $\locform$ is evaluated at the first configuration, wrt.\ all processes $m$.

Next, we define a restricted logic, $\DataPDLm(\DA)$, for which we later present our main result.
We say that a path formula $\pi$ is \emph{unambiguous} if, from a given position, it defines at most one reference point. Formally, for all rings $\ring=(n:\ldots)$, $\ring$-runs $\run$ of $\DA$, processes $m \in \set{n}$, and positions $x \in \Coord{\run}$, there is at most one $x' \in \Coord{\run}$ such that $(x,x') \in \Sem{\pi}{\run,m}$. For example, $\stay$, $\godown$, $\goright$, and $\goright^\ast\test{\marked}$ are unambiguous, while $\goright^\ast$ and $\goleft + \goright$ are not unambiguous.

\begin{definition}
A $\DataPDL(\DA)$ formula is contained in $\DataPDLm(\DA)$ if every subformula $\phi=\existsp{r}{r'}{\pi}{\pi'}{\bowtie}$ with ${\bowtie} \in \{<,\le\}$ is such that $\pi$ and $\pi'$ are \emph{unambiguous}. Moreover, $\phi$ must \emph{not} occur (i) in the scope of a negation, (ii) on the left-hand side of an implication $\underline{~\;} \!\Rightarrow\! \underline{~\;}\,$, or (iii) within a test $\test{\,\underline{~\;}\,}$. Note that guards using $=$ and $\neq$ are still unrestricted.
\myqed
\end{definition}

\newcommand{\nextfound}{\pi_\mathsf{found}}

\begin{example}\label{ex:datapdl}
Let us \emph{formalize}, in $\DataPDLm(\DA)$, the correctness criteria for $\DA_\mathsf{Franklin}$ and $\DA_\mathsf{DKR}$ that we stated informally in Examples~\ref{ex:franklin} and \ref{ex:dolev}.
Consider the following local formulas:
\[\begin{array}{ll}
\locform_\mathsf{last}=\forallpath{\godown}\false &
\locform_\mathsf{max} = \forallpath{\goright^\ast} \bigl(\existsleq{\idreg}{r}{\stay}{\nextfound}\bigr)\\[1ex]

\locform_\mathsf{acc} = \forallpath{\goright^\ast}(\passiv \vee \found) &
\locform_{r=\idreg} = \existspath{\nextfound}\bigl(\existsp{r}{\idreg}{\stay}{\stay}{=}\bigr)\\[1ex]

\locform_\mathsf{found} = \existspath{\nextfound\goright(\test{\neg\found}\goright)^\ast}\marked ~~~~~~ &
\locform_{r=r} = \neg\bigl(\existsp{r}{r}{\stay}{\goright^\ast}{\neq}\bigr)
\end{array}\]
where $\nextfound=(\test{\neg\found}\goright)^\ast\test{\found}$.
Note that $\nextfound$
is unambiguous: while going to
the right, it always stops at the \emph{nearest} process that is in state
$\found$. Thus, $\phi_\mathsf{max}$ is indeed a local \DataPDLm formula.
Consider the $\DataPDLm$ formula
\[\Phi_1 = \Allrings{
\forallpath{\godown^\ast}\bigl((\locform_\mathsf{last}\wedge\locform_\mathsf{acc}) \Rightarrow (\locform_\mathsf{found} \wedge \locform_\mathsf{max} \wedge \locform_{r=r} \wedge \locform_{r=\idreg})\bigr)
}\,.\]
It says that, at the end (i.e., in the last configuration) of each accepting run, expressed by $\forallpath{\godown^\ast}\bigl((\locform_\mathsf{last}\wedge\locform_\mathsf{acc}) \Rightarrow{\ldots}\bigr)$, we have that
\begin{itemize}\itemsep=0.5ex
\item[(i)] there is exactly one process $i_0$ that ends in state $\found$ (guaranteed by $\locform_\mathsf{found}$),
\item[(ii)] register $r$ of $i_0$ contains the maximum over all pids ($\locform_\mathsf{max}$),
\item[(iii)] register $r$ of $i_0$ contains the pid of $i_0$ itself ($\locform_{r=\idreg}$), and
\item[(iv)] all processes store the same pid in $r$ ($\locform_{r=r}$).
\end{itemize}
Thus, $\DA_\mathsf{Franklin} \models \Phi_1$. On the other hand, we have $\DA_\mathsf{DKR} \not\models \Phi_1$, because in $\DA_\mathsf{DKR}$ the process that ends in $\found$ is not necessarily the process with the maximum pid. However, we still have $\DA_\mathsf{DKR} \models
\Phi_2$ where \[\Phi_2=\Allrings{
\forallpath{\godown^\ast}\bigl((\locform_\mathsf{last}\wedge\locform_\mathsf{acc}) \Rightarrow (\locform_\mathsf{found} \wedge \locform_\mathsf{max} \wedge \locform_{r=r})\bigr)
}\,.\]

The next example formulates the correctness constraint for a distributed sorting algorithm. We would like to say that, at the end of an accepting run, the pids stored in registers $\reg$ are strictly totally ordered.
Suppose $\locform_\mathsf{acc}$ represents an acceptance condition and $\locform_\mathsf{least}$ says that there is exactly one process that terminates in some dedicated state $\mathit{least}$, similarly to $\locform_\mathsf{found}$ above. Then, \[\Phi_3 = \Allrings{
\forallpath{\godown^\ast}\bigl((\locform_\mathsf{last}\wedge\locform_\mathsf{acc}) \Rightarrow 
(\phi_\mathsf{least} \wedge \forallpath{\goright^\ast\test{\neg\mathit{least}}}
(\existsless{r}{r}{\goleft}{\stay}))\bigr)}\] makes sure that, whenever process $j$ is 
not terminating in $\mathit{least}$, its left
neighbor $i$ stores a smaller pid in $r$ than $j$ does.

Note that $\Phi_1$, $\Phi_2$, and $\Phi_3$ are indeed \DataPDLm formulas.
\myqed
\end{example}

Unsurprisingly, model checking distributed algorithms against \DataPDLm is undecidable:

\begin{theorem}\label{thm:undecidable}
The following problem is undecidable: Given a distributed algorithm $\DA$ and $\Phi \in \DataPDLm(\DA)$, do we have $\DA \models \Phi$\,? (Actually, this even holds for formulas $\Phi$ that express simple state-reachability properties and do not use any guards on pids.)
\end{theorem}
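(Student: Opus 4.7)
The plan is to reduce from the halting problem of deterministic Turing machines. Given a TM $M$ and input $w$, I construct a distributed algorithm $\Dtm$ that simulates $M$ on rings large enough to hold its tape, together with the $\DataPDLm$ formula $\Phi = \Allrings{\forallpath{(\goright + \godown)^\ast}\neg\halt}$, stating that no process ever enters a distinguished halt state. Since $\Phi$ expresses plain state-reachability and uses no pid guards at all, undecidability of $\Dtm \models \Phi$ follows once we show that some run of $\Dtm$ reaches $\halt$ iff $M$ halts on $w$.

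In $\Dtm$, each process plays the role of one TM tape cell and keeps its current tape symbol in its local state; at each moment exactly one process is marked as carrying the TM head and additionally stores $M$'s current control state. A round simulates one TM transition: the head process reads its own symbol, writes the new symbol, switches into either $\halt$ or a no-head state, and transfers the control state to the left or right neighbor via a send-receive pair, while every non-head process executes an idle transition that preserves its symbol. The left end of the tape is encoded by the unique process with the smallest pid, identified and certified by a pid-propagation sub-protocol running inside $\Dtm$: a process may enter the border role only if it has successfully forwarded its pid around the ring and observed no strictly smaller value along the way; any inconsistency detected in this validation causes the responsible process to deadlock. Whenever the head would try to leave the border, no transition is applicable and $\halt$ is unreachable.

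The main obstacle is ensuring both completeness and soundness of the simulation. For completeness, taking a ring whose size strictly exceeds the space consumption of $M$ on $w$ and initialized so that the minimum-pid process is adjacent to the initial head, $\Dtm$ first writes $w$ onto the tape and then simulates $M$ step by step until it enters $\halt$. For soundness, ill-formed runs must not spuriously reach $\halt$: uniqueness of the border is enforced by the pid-propagation validation, and head uniqueness is preserved inductively since only the current head emits a transfer message per round, so any deviation from a faithful simulation leads to deadlock rather than to $\halt$. Hence some ring admits a $\Dtm$-run reaching $\halt$ iff $M$ halts on $w$, equivalently $\Dtm \not\models \Phi$ iff $M$ halts on $w$. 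Since halting is undecidable, so is model checking against $\Phi$, which is a pure state-reachability formula containing no pid guards whatsoever, establishing the parenthetical strengthening in the statement.
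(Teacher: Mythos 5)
Your overall strategy coincides with the paper's: simulate a Turing machine on a ring, one process per tape cell, a head bit plus the control state carried by one process, a leader/border determined via pids, and a pure state-reachability formula $\Allrings{\forallpath{\ldots}\neg\halt}$. The formula and the border certification are fine. However, the proposal skips the two points where the model actually bites, and as written the simulation does not go through.

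First, you write that the head ``transfers the control state to the left or right neighbor via a send-receive pair,'' but in this model the \emph{only} transmissible data are pids, and a receiver can do nothing with a received pid except store it and compare it (by $=$ or $<$) with other registers; there is no finite message alphabet. So the neighbor has no way to learn \emph{which} control state was intended, and letting it guess destroys soundness. This is the central technical obstacle, and the paper resolves it with a dedicated preliminary phase (Remark~\ref{rem:messagesAlphabet}) that installs $|\stm|$ reference pids into registers $\hat{r}_1,\ldots,\hat{r}_k$ of every process, so that ``message $b_j$'' means ``the pid equal to the content of $\hat{r}_j$'' and the receiver decodes by equality guards; it also has to handle rings too small for this encoding. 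Your variant has the same problem twice over, since writing the input $w$ onto the tape also requires telling each cell which letter it holds (the paper sidesteps this by starting from the empty tape and guessing the input locally and nondeterministically). Second, receives are non-blocking: the transition that promotes a neighbor to the new head is enabled whether or not a message actually arrived, so ``only the current head emits a transfer message per round'' does not by itself preserve head uniqueness --- a process can fire the become-head transition spuriously. The paper needs an explicit sentinel mechanism (a register $r_\bot$ holding a reserved value, reset before each receive and tested with $r_{\mathsf{in}} \neq r_\bot$) to detect that a real message arrived. Without filling in these two mechanisms, the claimed equivalence ``some run reaches $\halt$ iff $M$ halts'' fails in the soundness direction.
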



\section{Round-Bounded Model Checking}
\label{sec:verification}

In the realm of multithreaded concurrent programs, where model checking is undecidable in general, a fruitful approach has been to underapproximate the behavior of a system \cite{Qadeer:TACAS05}. The idea is to introduce a parameter that measures a characteristic of a run such as the number of thread switches it performs. One then imposes a bound on this parameter and explores all behaviors up to that bound. In numerous distributed algorithms, the number $\bound$ of rounds needed to conclude is exponentially smaller than the number of processes (cf.\ Examples~\ref{ex:franklin} and \ref{ex:dolev}). Therefore, $\bound$ seems to be a promising parameter for bounded model checking of distributed algorithms.

For a distributed algorithm $\DA$, a formula $\Phi=\Allrings{\locform} \in \DataPDL(\DA)$, and  $\bound \ge 1$, we write $\DA \models_\bound \Phi$ if, for all rings $\ring=(n:\ldots)$, all $\ring$-runs $\run$ of length $k \le \bound$, and all processes $m \in \set{n}$, we have $\run,m,(m,0) \models \locform$. We now present our main result:

\begin{theorem}\label{thm:main}
The following problem is PSPACE-complete: Given a distributed algorithm $\DA$, $\Phi \in \DataPDLm(\DA)$, and a natural number $\bound \ge 1$ (encoded in unary), do we have $\DA \models_\bound \Phi$\,?
\end{theorem}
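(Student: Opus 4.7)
The plan is to view a round-bounded run of $\DA$ symbolically: for a fixed bound $\bound$ and an unknown ring size $n$, such a run is entirely determined by the underlying ring together with the tuple of transitions fired at each round. We encode it as a word $w$ of length $n$ over the finite alphabet $\Trans^{\bound}$ (one column of the cylinder per process, as in Figure~\ref{fig:symbrun}). The set of syntactically consistent symbolic runs---states agree along transitions, each $\recleft{-}/\recright{-}$ is matched by a compatible send across a block of $\fwdcmd$-processes, updates and guards are well-formed---is regular and is recognizable by a small two-way alternating automaton on this word.

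The technical core is handling the data. For each register $r$ and each position $(i,j)$ the value $\regmap_i^j(r)$ equals the pid of a unique \emph{origin} process, obtained by chasing backwards through updates, matched send/receive pairs, and $\fwdcmd$-segments around the ring; this is exactly the backwards traversal performed by the path automata $\pathaut_{r'}^1$ and $\pathaut_{r''}^1$ depicted in Figure~\ref{fig:symbrun}, and is realized as a polynomial-size two-way word automaton $\pathaut_r$. Equality and disequality of two register occurrences reduce to comparing their origins and are expressible directly in $\ICPDL$. Feasibility of the $<$ and $\le$ guards is more delicate. The chained $<,\le,=$ guards induce a directed graph on processes, and $w$ is feasible precisely when this graph has no cycle containing a strict edge. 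This acyclicity is encoded via the loop operator $\loopform{\pi}$ of $\ICPDL$: a path program $\pi$ nondeterministically picks a guard occurrence, uses the tracer automata to jump between endpoints, and chains such jumps; then $\neg\loopform{\pi}$ forbids a cyclic chain of strict constraints. Equality chains are absorbed by collapsing $=$-linked origins inside the tracers.

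A specification $\Phi=\Allrings{\locform}\in\DataPDLm(\DA)$ is then compiled to $\ICPDL$ over the symbolic word compositionally. Navigation atoms $\stay,\goleft,\goright,\godown,\goup$, state tests, and the marker $\marked$ have immediate counterparts on the cylinder word. For a guard $\existsp{r}{r'}{\pi}{\pi'}{\bowtie}$ with ${\bowtie}\in\{=,\neq\}$ we reuse the tracer machinery. For ${\bowtie}\in\{<,\le\}$ we exploit the syntactic restrictions defining $\DataPDLm$: since $\pi,\pi'$ are unambiguous and such a subformula may occur only positively (outside negations, outside the left-hand side of implications, and outside tests), we can \emph{add} the corresponding edge to the constraint graph and verify, using one further occurrence of $\loopform{\cdot}$, that the augmented graph remains acyclic. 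Negating $\Phi$, we obtain a single $\ICPDL$ formula over the symbolic word, which compiles into an alternating two-way word automaton of size polynomial in $|\DA|+|\Phi|+\bound$ (recall that $\bound$ is unary); emptiness of such an automaton is in PSPACE, giving the upper bound. The matching lower bound comes from the construction behind Theorem~\ref{thm:undecidable}, now cut off at polynomially many rounds: it simulates a polynomial-space Turing machine using only state-reachability and no guards, which is PSPACE-hard by a standard argument.

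The main obstacle, and what the above hinges on, is the $\fwdcmd$ mechanism: a single send is consumed by the first non-$\fwdcmd$ neighbor but must be threaded through an unbounded block of forwarding processes, so the register-tracing automata $\pathaut_r$ are intrinsically two-way and alternating. This is also what forces the restrictions of $\DataPDLm$ on $<$/$\le$-guards: without unambiguity and positivity, cycle-freeness of the constraint graph would have to be asserted under existential quantification inside the scope of a negation, and the loop-based encoding of acyclicity would no longer suffice.
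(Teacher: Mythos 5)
Your overall architecture --- symbolic runs over the finite alphabet of transitions, origin-tracing path automata, consistency of $<$-guards as acyclicity of a constraint graph expressed with $\loopform{\cdot}$, and a final reduction to emptiness of alternating two-way word automata --- is the same as the paper's. There are, however, two genuine gaps.

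First, your translation of a positively occurring guard $\existsp{r}{r'}{\pi}{\pi'}{<}$ has the wrong polarity. You render it as ``add the corresponding edge between the two origins and check that the augmented constraint graph remains acyclic'', which asserts \emph{consistency}: \emph{some} concretization of the symbolic run satisfies $r<r'$. What the reduction needs (this is exactly what the paper's Lemma~\ref{lem:lcpdl} establishes) is \emph{entailment}: $\widetilde{\locform}$ must hold on a table $T$ iff \emph{every} run concretizing $T$ satisfies $\locform$, since only then does emptiness of $L(\lcpdl_\DA \wedge \neg\widetilde{\locform})$ characterize $\DA \models \Phi$. For a $<$-guard, entailment means the constraint graph \emph{already contains} a $\pi_<$-path from the origin of $r$ to that of $r'$; the paper's translation $\loopform{\widetilde{\pi} \cdot (\pathaut_{r}^{2})^{-1} \cdot \pi_< \cdot \pathaut_{r'}^{2} \cdot (\widetilde{\pi'})^{-1}}$ asserts precisely this \emph{evidence}. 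With your version, a table whose guards leave the two origins incomparable satisfies your translation of $r<r'$ (adding the edge is consistent), so $\neg\widetilde{\locform}$ fails on it and the concretizations with $r \ge r'$ that violate $\locform$ are never reported: the procedure would certify incorrect algorithms. Relatedly, your closing diagnosis misplaces the role of the $\DataPDLm$ restrictions: they are not forced by $\fwdcmd$, but are exactly what makes the universal quantification over concretizations distribute through the boolean structure of $\locform$ down to the evidence-based translation of the $<$- and $\le$-guards.

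Second, the lower bound does not follow from ``the construction behind Theorem~\ref{thm:undecidable} cut off at polynomially many rounds''. In that construction one round simulates one step of the Turing machine, so a unary bound $\bound$ yields only polynomially many machine steps, which is nowhere near PSPACE-hardness. The paper instead reduces from intersection-emptiness of finite automata $\mathcal{A}_1,\ldots,\mathcal{A}_k$: the ring encodes the candidate word (one letter per process), and in round $i$ \emph{all} processes simultaneously fire one transition of $\mathcal{A}_i$, passing states around the ring, so a single round runs an entire automaton over an unboundedly long word and $k$ automata need only about $k$ rounds. Some device of this kind, which trades ring length for rounds, is indispensable for the hardness direction.
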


The lower-bound proof, a reduction from the intersection-emptiness problem for a list of finite automata, can be found in the appendix. Before we prove the upper bound, let us discuss the result in more detail. We will first compare it with ``na{\"i}ve'' approaches to solve related questions. Consider the problem to determine whether a distributed algorithm satisfies its specification for all rings up to size $n$ and all runs up to length $\bound$. This problem is in coNP: We guess a ring (i.e., essentially, a permutation of pids) and a run, and we check, using \cite{Lange06}, whether the run does \emph{not} satisfy the formula. Next, suppose only $\bound$ is given and the question is whether, for all rings up to size $2^\bound$ and all runs up to length $\bound$, the property holds. Then, the above procedure gives us a coNEXPTIME algorithm.

Thus, our result is interesting complexity-wise, but it offers some other advantages. First, it actually checks correctness (up to round number $\bound$) for \emph{all} rings. This is essential when verifying distributed \emph{protocols} against safety properties.
Second, it reduces to a satisfiability check in the well-studied propositional dynamic logic with loop and converse (LCPDL) \cite{Goeller2009}, which in turn can be reduced to an emptiness check of alternating two-way automata (A2As) over words \cite{Vardi1998}. The ``na{\"i}ve'' approaches, on the other hand, do not seem to give rise to viable algorithms. Finally, our approach is uniform in the following sense: We will construct, in polynomial time, an A2A that recognizes precisely the symbolic abstractions of runs (over arbitrary rings) that violate (or satisfy) a given formula. Our construction is \emph{independent} of the parameter $\bound$. The emptiness check then requires a bound on the number of rounds (or on the number of processes), which can be adjusted gradually without changing the automaton.

\paragraph{Proof Outline for Upper Bound of Theorem~\ref{thm:main}.}

\newcommand{\Pictures}{\Trans^{++}}
\newcommand{\pict}{T}
\newcommand{\RunsTR}[2]{\mathit{Runs}_{#1,#2}}
\newcommand{\Runs}[1]{\mathit{Runs}(#1)}
\newcommand{\posRuns}[1]{\mathit{Runs}^+(#1)}
\newcommand{\negRuns}[1]{\mathit{Runs}^-(#1)}

Let $\DA$ be the given distributed algorithm and $\Phi \in \DataPDLm(\DA)$.
We will reduce model checking to the satisfiability problem for LCPDL \cite{Goeller2009}. While \DataPDLm is interpreted over runs, containing pids from an infinite alphabet, the new logic will reason about symbolic abstractions over a \emph{finite} alphabet. A symbolic abstraction of a run only keeps the transitions and discards pids. Thus, it can be seen as a table (or picture) whose entries are transitions (cf.\ Figure~\ref{fig:symbrun}).

First, we translate $\DA$ into an LCPDL formula. Essentially, it checks that guards are not used in a contradictory way.
To compare $\DA$ with $\Phi$, the latter is translated into an LCPDL formula, too. However, there is a subtle point here.
For simplicity, let us write $r < r'$ instead of $\existsp{r}{r'}{\stay}{\stay}{<}$.
Satisfaction of a formula $r < r'$ can only be guaranteed in a symbolic execution if the flow of pids provides \emph{evidence} that $r < r'$ really holds. More concretely, the (hypothetic) formula $(r < r') \vee (r = r') \vee (r' < r)$ is a tautology, but it may not be possible to prove any of its disjuncts on the basis of a symbolic run.
This is the reason why $\DataPDLm$ restricts $<$- and $\le$-tests. It is then indeed enough to reason about symbolic runs (cf.\ Lemma~\ref{lem:lcpdl} below).
We leave open whether one can deal with full \DataPDL.

Overall, we reduce model checking to satisfiability of the conjunction of two \ICPDL formulas of polynomial size: the formula representing the algorithm, and the negation of the formula representing the specification. Satisfiability of LCPDL over symbolic runs (of bounded height) can be checked in PSPACE \cite{Goeller2009} by a reduction to the emptiness problem for A2As over words \cite{Vardi1998}. Our approach is, thus, automata theoretic in spirit, though the power of alternation is used differently than in \cite{Vardi1996}, which translates LTL formulas into automata.

Next, we present the logic LCPDL over symbolic runs. Then, in separate subsections, we translate $\DA$ as well as its \DataPDLm specification into LCPDL. For the remainder of this section, we fix a distributed algorithm $\DA=(\States,\init,\Reg,\Trans)$.

\paragraph{PDL with Loop and Converse (LCPDL).}

As mentioned before, a symbolic abstraction of a run of $\DA$ is a table, whose entries are transitions from the finite alphabet $\Trans$. A \emph{table} is a triple $\pict=(n,k,\lambda)$ where $n,k \ge 1$ and $\lambda: \Coord{T} \to \Trans$ labels each position/coordinate from $\Coord{T}= \set{n} \times \setz{k}$ with a transition. Thus, we may consider that $T$ has $n$ columns and $k+1$ rows. In the following, we will write $\coord{\pict}{i}{j}$ for $\lambda(i,j)$, and $\col{\pict}{i}$ for the $i$-th column of $T$, i.e., $\col{\pict}{i}=\pict[i,0] \ldots \pict[i,k] \in \Trans^+$.
Let $\Pictures$ denote the set of all tables.

\newcommand{\roundright}{\rotatebox[origin=cc]{180}{$\circlearrowleft$}}
\newcommand{\roundleft}{\rotatebox[origin=cc]{180}{$\circlearrowright$}}

Formulas $\lcpdl \in \ICPDL(\DA)$ are interpreted over tables. Their syntax is given as follows:
\[\begin{array}{rcl}
  \lcpdl,\lcpdl' &\!\!::=\!\!& t \mid s \mid \gotocmd{s} \mid \fwdcmd \mid \sendleft{r} \mid \sendright{r} \mid \recleft{r} \mid \recright{r} \mid r<r' \mid r=r' \mid \updcmd{r}{r'} \mid\\[0.5ex]
  & & \neg \lcpdl \mid \lcpdl \wedge \lcpdl' \mid \Existspath{\pi}{\lcpdl} \mid \loopform{\pi}
  \\[1ex]
  \pi,\pi' &\!\!::=\!\!& \test{\lcpdl} \mid d \mid \pi + \pi' \mid \pi \cdot \pi' \mid \pi^{\ast} \mid \pi^{-1} \mid \pathaut
\end{array}\]
where $t \in \Trans$, $s \in \States$, $r,r' \in \Reg$, $d \in \{\stay,\goright,\godown\}$, and $\pathaut$ is a \emph{path automaton}: a non-deterministic finite automaton whose transitions are labeled with path formulas $\pi$. Again, $\lcpdl$ is called a \emph{local formula}.
We use common abbreviations to include disjunction, implication, $\true$, and $\false$, and we let $\pi^+ = \pi \cdot \pi^\ast$, $\forallpath{\pi}\lcpdl \formdef \neg\existspath{\pi}\neg\lcpdl$, $\existspath{\pi} \formdef \existspath{\pi}\true$, $\goleft = \goright^{-1}$, and $\goup = \godown^{-1}$.

The semantics of \ICPDL is very similar to that of \DataPDL. A local formula $\lcpdl$ is interpreted over a table $\pict=(n,k,\lambda)$ and a position $x \in \Coord{T}$. When it is satisfied, we write $\pict,x \models \lcpdl$. Moreover, a path formula $\pi$ determines a binary relation $\Sem{\pi}{\pict} \subseteq \Coord{T} \times \Coord{T}$, relating those positions that are connected by a path matching $\pi$.

We consider only the most important cases: We have $\pict,(i,j) \models t$ if
$\pict[i,j]=t$.  For a state, command, guard, or update $\gamma$, let
$\pict,(i,j) \models \gamma$ if $\gamma \in \pict[i,j]$.
Loop and converse are as expected:
$\pict,x \models \loopform{\pi}$ if $(x,x) \in \Sem{\pi}{\pict}$, and
$\Sem{\pi^{-1}}{\pict} = \{(y,x) \mid (x,y) \in \Sem{\pi}{\pict}\}$.
The semantics of
$\goright$ (and $\goleft$) is slightly different than in \DataPDL, since we are not allowed to go beyond the last and first column.
Thus, $\Sem{\goright}{\pict} =
\{((i,j),(i+1,j)) \mid i \in \set{n-1}$ and $j \in \setz{k}\}$.
However, we can simulate the ``roundabout'' of a ring and set ${\hookrightarrow} = \goright + \test{\neg\existspath{\goright}}\goleft^\ast\test{\neg\existspath{\goleft}}$ as well as ${\hookleftarrow} = {\hookrightarrow^{-1}}$.
Actually, the
first column of a table will play the role of a marked process in a ring (later, $\marked$ will be
translated to $\neg\existspath{\goleft}$).

Finally, the semantics of path automata is given by $\Sem{\A}{\pict} = \{(x,y) \mid$ there is $\pi_1 \ldots \pi_\ell \in L(\A)$ with $(x,y) \in \Sem{\pi_1 \cdot \ldots \cdot \pi_\ell}{\pict}\}$ where $L(\A)$ contains a \emph{sequence} $\pi_1 \ldots \pi_\ell$ of path formulas if $\A$ admits a path $q_0 \xrightarrow{\pi_1} q_1 \xrightarrow{\pi_2} \ldots \xrightarrow{\pi_\ell} q_\ell$ from its initial state $q_0$ to a final state $q_\ell$.

A formula $\psi \in \ICPDL(\DA)$ defines the language $L(\lcpdl) = \{\pict \in \Pictures \mid \pict,(1,0) \models \lcpdl\}$. For $\bound \ge 1$, we denote by $L_\bound(\lcpdl)$ the set of tables $(n,k,\lambda) \in L(\lcpdl)$ such that $k \le \bound$.

\begin{theorem}[essentially \cite{Goeller2009}]\label{thm:icpdl}
The following problem is PSPACE-complete: Given a distributed algorithm $\DA$, a formula $\lcpdl \in \ICPDL(\DA)$, and $\bound \ge 1$ (encoded in unary), do we have $L_\bound(\lcpdl) = \emptyset$\,? (The input $\DA$ is only needed to determine the signature of the logic.)
\end{theorem}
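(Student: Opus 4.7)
The plan is to follow the automata-theoretic recipe of~\cite{Goeller2009} and reduce $L_\bound(\lcpdl) = \emptyset$ to emptiness of an alternating two-way word automaton (A2A) of polynomial size, since the latter is PSPACE-complete~\cite{Vardi1998}. First I would linearize tables. A candidate table $(n,k+1,\lambda)$ with $k\le \bound$ will be encoded as a word
\[
w \;=\; w_1\, \# \, w_2 \, \# \cdots \# \, w_n \, \# \qquad \text{with}\qquad w_i \in \Trans^{k+1}\cdot\bot^{\,\bound-k}
\]
over the fixed alphabet $\Trans\cup\{\bot,\#\}$, so that every ``column block'' has exactly the same polynomial length $\bound+2$. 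Since $\bound$ is given in unary, an A2A can maintain a phase counter modulo $\bound+2$ and thus locate column boundaries and check well-formedness (equal heights, no $\bot$ before the first $\#$, etc.) by a regular constraint.

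Once tables are words, vertical navigation ($\godown,\goup$) becomes a single step, horizontal navigation ($\goright,\goleft$) becomes a jump of $\bound+2$ steps, and the wrap-around $\hookrightarrow$, $\hookleftarrow$ is obtained by detecting absence of $\goright$/$\goleft$ moves (beginning/end of $w$) and jumping accordingly. I would then translate $\lcpdl$ into an A2A $\B_\lcpdl$ by induction on its Fischer-Ladner closure, exactly in the spirit of~\cite{Goeller2009}. Boolean cases are handled by alternation; atomic assertions over $\pict[i,j]$ (state, command, guard, update) become letter tests on the current position of $w$; every path formula $\pi$ is compiled, together with the path automata $\pathaut$ appearing in $\pi$, into a single NFA over the move alphabet, and $\Exists{\pi}{\lcpdl}$ is then realized by simulating this NFA while performing the corresponding two-way moves on $w$ and eventually calling $\B_\lcpdl$. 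Converse $\pi^{-1}$ is free because the A2A is two-way, and $\loopform{\pi}$ is handled by the standard two-way alternation trick: spawn two copies moving in lock-step along $\pi$ from the current position and verify that they return to the same position, which is checkable in A2A using universal branching on a proposed loop endpoint. The overall size of $\B_\lcpdl$ is polynomial in $|\lcpdl|$, $|\Trans|$ and $\bound$.

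For the lower bound I would reduce from emptiness of intersection of a list of deterministic finite automata, which is PSPACE-complete, already at height $\bound=1$: each automaton in the list is simulated on one row of the table, with the transitions aligned through guards over a fresh column index encoded into $\States$. Alternatively, satisfiability of classical PDL over words is already PSPACE-hard and embeds directly in $\ICPDL(\DA)$ with $\bound=1$. The main obstacle I anticipate is not in the compilation itself but in keeping everything polynomial in the presence of the two-dimensional grid structure: the encoding of a column into a polynomially long block is what makes vertical navigation, loop, and converse simultaneously expressible by a polynomial-size A2A, and it crucially depends on $\bound$ being in unary. With this encoding in place, the upper bound reduces mechanically to~\cite{Goeller2009,Vardi1998}.
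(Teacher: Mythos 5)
Your proposal follows essentially the same route as the paper: linearize the table column by column into a word over the finite alphabet of transitions (you pad every column block to a uniform length $\bound+2$ and insert separators, whereas the paper simply fixes the height to exactly $\bound$ and runs the emptiness check once per height $k\le\bound$ --- the unary encoding of $\bound$ makes both variants polynomial), turn horizontal moves into jumps of fixed polynomial length and vertical moves into single steps, and then invoke the alternating two-way automaton machinery of \cite{Goeller2009,Vardi1998}. The paper phrases the middle step as a translation of \ICPDL over tables into \ICPDL over words and then cites the PSPACE bound for the latter (noting that the formulas have bounded intersection width), whereas you compile directly into an A2A; these amount to the same construction. Your hardness argument (DFA intersection emptiness, or plain PDL over words, already at minimal height) is also consistent with the intended source of the lower bound.

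The one point where your sketch would not survive scrutiny is the treatment of $\loopform{\pi}$. An alternating two-way automaton cannot ``spawn two copies moving in lock-step and verify that they return to the same position'': distinct branches of an alternating run cannot communicate or compare their head positions, and there is no way to ``propose a loop endpoint'' --- the endpoint is the current position itself, which the automaton cannot mark or later recognize. Handling loop is precisely the non-trivial content of \cite{Goeller2009}: $\loopform{\pi}$ is an intersection with the identity relation, and it is dealt with by computing loop relations (crossing behaviour of the path automaton at each position), which is exactly where the bounded-intersection-width hypothesis enters. Since both the paper and your proposal ultimately import this construction from the cited reference, this is a misdescription of a borrowed step rather than a fatal gap, but as literally written that part of your argument is not a valid A2A construction and should instead defer explicitly to the loop-relation technique of \cite{Goeller2009}.
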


\paragraph{From Distributed Algorithms to \ICPDL.}

\begin{figure}[t]
\centering
\begin{tabular}{l}
$\localform{r}{r'} \formdef
\begin{cases}
\test{\bigwedge_{\bar{r} \in \Reg}\!\neg\existspath{(\msgform{\bar{r}}{r})^{-1}}} & \textup{if~} r = r'\\[0.5ex]
\test{\false} & \textup{if~} r \neq r'
\end{cases}$
\qquad
$\updform{r}{r'} \formdef
\begin{cases}
\test{ \bigwedge_{\bar{r} \neq r} \neg(\updcmd{r}{\bar{r}})
} & \textup{if~} r = r'\\[0.5ex]
\test{\updcmd{r'}{r}} & \textup{if~} r \neq r'
\end{cases}$\\[5ex]

$\msgform{r}{r'} \formdef
\left(
\begin{array}{rl}
& \test{\sendright{r}} \cdot (\hookrightarrow \cdot \test{\passiveform})^\ast \cdot \hookrightarrow \cdot\test{\recleft{{r'}}}\\[0.5ex]
\!+ \!\!\!\!& \test{\sendleft{r}} \cdot (\hookleftarrow \cdot \test{\passiveform})^\ast \cdot \hookleftarrow \cdot\test{\recright{{r'}}}
\end{array}\right)$
\qquad
$\nextform{r}{r'} \formdef
\begin{cases}
\godown & \textup{if~} r = r'\\[0.5ex]
\test{\false} & \textup{if~} r \neq r'
\end{cases}$
\end{tabular}
\caption{Path formulas to trace back transmission of pids\label{fig:pathform}}
\end{figure}

\newcommand{\ph}{h}
\newcommand{\emptytrans}{\mathsf{t}}

Wlog., we assume that $\Trans$ contains $\emptytrans=\datrans{\mathsf{s}}{\skipcmd}{\skipcmd}{\skipcmd}{\skipcmd}{\init}$ where $\mathsf{s} \neq \init$ does not occur in any other transition.

Let $\ring=(n:p_1,\ldots,p_n)$ be a ring and $\run = {\conf_0 \confrel{t^1} \conf_1 \confrel{t^2} \ldots \confrel{t^k} \conf_k}$ be an $\ring$-run of $\DA$, where $t^j = (t_{1}^j,\ldots,t_{n}^j) \in \Trans^n$ for all $j \in \set{k}$. From $\run$, we extract the \emph{symbolic run} $\pictofrun{\run}=(n,k,\lambda) \in \Pictures$ given by its columns $\pictofrun{\run}[i] = \emptytrans\, t_i^1 \ldots t_i^k$. The purpose of the dummy transition $\emptytrans$ at the beginning of a column is to match the number of configurations in a run.

We will construct, in polynomial time, a formula $\lcpdl_\DA \in \ICPDL(\DA)$
such that $L(\lcpdl_\DA) = \{\pictofrun{\run} \mid \run$ is a run of $\DA\}$.
In particular, $\lcpdl_\DA$ will verify that (i) there are no
cyclic dependencies that arise from $<$-guards, and (ii) registers in equality
guards can be traced back to the same origin. In that case, the
symbolic run is consistent and corresponds to a ``real'' run of $\DA$.

The main ingredients of $\lcpdl_\DA$ are some path formulas that describe the transmission of pids in a symbolic run.
They are depicted in Figure~\ref{fig:pathform}. For $\theta \in \{\mathit{loc},\mathit{msg},\mathit{upd},\mathit{next}\}$ and $\ph \in \{0,1,2\}$, the meaning of $(x,y) \in \Sem{\theta_{r,r'}^{\ph,\ph'}}{T}$ is that the pid stored in $r$ at \emph{stage} $\ph$ of position/transition $x$ has been propagated to register $r'$ at stage $\ph'$ of $y$. Here, $\ph=0$ means ``after sending'', $\ph=1$ ``after receiving'', and $\ph=2$ ``after register update''. The interpretation of ``propagated'' depends on $\theta$.
Formula $\localform{r}{r'}$ says that the value of register $r$ is not affected by reception. Similarly, $\smash{\updform{r}{r'}}$ takes care of updates. Formula $\smash{\nextform{r}{r'}}$ allows us to switch to the next transition of a process, preserving the value of $r (= r')$. The most interesting case is $\smash{\msgform{r}{r'}}$, which describes paths across several processes. It relates the sending of $r$ and a corresponding receive in $r'$, which requires that all intermediate transitions are forward transitions. All path formulas are illustrated in Figure~\ref{fig:symbrun}.

Since pids can be transmitted along several transitions and messages, the formulas $\smash{\theta_{r,r'}^{\ph,\ph'}}$ will be composed by path automata. For $\textup{\ph} \in \{1,2\}$ and $\textup{r} \in \Reg$, we define a path automaton $\smash{\pathaut_{\textup{r}}^{\textup{\ph}}}$ that, in $T_\run$, connects some positions $(i,0)$ and $(i',j')$ iff, in $\run$, register $\textup{r}$ stores $p_i$ at stage $\textup{h}$ of position $(i',j')$. Its set of states is $\iota \cup (\{0,1,2\} \times \Reg)$. For all $\reg \in \Reg$, there is a transition from the initial state $\iota$ to $(0,r)$ with transition label $\test{\neg\existspath{\goup}}$. Thus, the automaton starts at the top row and non-deterministically chooses some register $r$. From state $(\ph,r)$, it can read any transition label $\smash{\theta_{r,r'}^{\ph,\ph'}}$ and move to $(\ph',r')$. The only final state is $(\textup{\ph},\textup{r})$. Figure~\ref{fig:symbrun} describes (partial) runs of $\pathaut_{r'}^{1}$ and $\pathaut_{r''}^{1}$, which allow us to identify the origin of $r'$ and $r''$ when applying the guard $r'<r''$.

Now, consistency of equality guards can indeed be verified by an \ICPDL formula. It says that, whenever an equality check $r = r'$ occurs in the symbolic run, then the pids stored in $r$ and $r'$ have a common origin. This can be conveniently expressed in terms of loop and converse. Note that guards are checked at stage $\ph=1$ of the corresponding transition:
\[\lcpdl_= ~\formdef~ \forallpath{(\goright + \godown)^\ast} \textstyle\bigwedge_{r,r' \in \Reg} \Bigl(r = r' ~\Rightarrow~ \loopform{(\pathaut_{r}^{1})^{-1} \cdot \pathaut_{r'}^{1}}\Bigr)\,.\]
The next path formula connects the first coordinate of a process $i$ with the first coordinate of another process $i'$ if some guard forces the pid of $i$ to be smaller than that of $i'$:
\[\pi_< ~\formdef \Bigl(\textstyle\sum_{r,r' \in \Reg} \pathaut_{r}^{1} \cdot \test{r < r'} \cdot (\pathaut_{r'}^{1})^{-1}\Bigr)^+\,.\]
Note that, here, we use the (strict) transitive closure. Consistency of $<$-guards now reduces to saying that there is no $\pi_<$-loop:
$\lcpdl_< ~\formdef~ \neg\existspath{\goright^\ast} \loopform{\pi_<}$.

Finally, we can easily write an \ICPDL formula $\lcpdl_{\mathsf{col}}$ that checks whether every column $\pict[i] \in \Trans^+$ (ignoring $\emptytrans$) is a valid transition sequence of $\DA$.
Finally, let $\lcpdl_\DA \formdef \lcpdl_= \wedge \lcpdl_< \wedge \lcpdl_{\mathsf{col}}$.

\begin{lemma}\label{lem:datapdl}
We have $L(\lcpdl_\DA) = \{\pictofrun{\run} \mid \run$ is a run of $\DA\}$.
\end{lemma}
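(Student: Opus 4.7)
The plan is to prove the two inclusions separately. For the inclusion $\{\pictofrun{\run} : \run \text{ is a run of } \DA\} \subseteq L(\lcpdl_\DA)$ (soundness), let $\run$ be a run of $\DA$ and $T = \pictofrun{\run}$. The conjunct $\lcpdl_{\mathsf{col}}$ holds by construction of $T$ from the column-wise transitions. The key observation for $\lcpdl_=$ and $\lcpdl_<$ is that, since pids are pairwise distinct, every register $r$ at stage $\ph \in \{0,1,2\}$ of position $(i,j)$ in $\run$ holds the pid of a unique origin column $\ell$, and a straightforward induction on $(j,\ph)$ shows that $((\ell,0),(i,j)) \in \Sem{\pathaut_r^\ph}{T}$ iff $\ell$ is exactly this origin. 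An equality guard $r = r'$ at $(i,j)$ then forces the two origins to coincide, yielding the loop $(\pathaut_r^1)^{-1} \cdot \pathaut_{r'}^1$ required by $\lcpdl_=$; for $\lcpdl_<$, a $\pi_<$-cycle would give a chain of strict inequalities $p_{\ell_1} < p_{\ell_2} < \cdots < p_{\ell_1}$, impossible in $\N$.

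For the converse inclusion (completeness), let $T = (n,k,\lambda) \in L(\lcpdl_\DA)$. I must produce a ring $\ring = (n : p_1,\ldots,p_n)$ and an $\ring$-run $\run$ with $\pictofrun{\run} = T$. The plan is to read off from $T$, for every register value at every stage of every position, the column where it must originate, and then to choose pids that meet all induced constraints. Concretely, I define a source function $\mathit{src}: \set{n} \times \setz{k} \times \{0,1,2\} \times \Reg \to \set{n}$ by setting $\mathit{src}(\ell,0,\ph,r) = \ell$ and propagating through the commands of each transition exactly as in the semantics of Section~\ref{sec:algorithms}. I then prove that $((\ell,0),(i,j)) \in \Sem{\pathaut_r^\ph}{T}$ iff $\mathit{src}(i,j,\ph,r) = \ell$; the tests built into the $\theta$-formulas---for instance the $\neg\existspath{(\msgform{\bar{r}}{r})^{-1}}$ clause inside $\localform{r}{r}$---rule out spurious nondeterministic guesses. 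Using this correspondence, $\lcpdl_<$ says that the directed graph $G_<$ on $\set{n}$, with an edge $\mathit{src}(i,j,1,r) \to \mathit{src}(i,j,1,r')$ for each occurrence of $r<r'$ in $\coord{T}{i}{j}$, is acyclic. Extending a topological sort of $G_<$ to a total order $\prec$ on $\set{n}$, I pick distinct $p_1,\ldots,p_n \in \N$ with $\ell \prec \ell' \Rightarrow p_\ell < p_{\ell'}$. On the resulting ring $\ring$, the column-wise transitions of $T$ then induce a run $\run$ in which every $<$-guard holds by the choice of pids, every $=$-guard holds because $\lcpdl_=$ forces both sides to share a source, and every state- and command-level compatibility is guaranteed by $\lcpdl_{\mathsf{col}}$; by construction $\pictofrun{\run} = T$.

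The main obstacle is establishing the tight correspondence between the nondeterministic path automata $\pathaut_r^\ph$ and the deterministic source function $\mathit{src}$. Since $\pathaut_r^\ph$ freely chooses a register pair at each step, I must argue that the internal tests of each $\theta_{r,r'}^{\ph,\ph'}$ admit a successful traversal of $T$ only when the move corresponds to a genuine propagation: $\msgform{r}{r'}$ demands a send, a chain of $\passiveform$ transitions, and a matching receive; $\localform{r}{r}$ excludes any incoming message landing in $r$; and $\updform{r}{r'}$, $\nextform{r}{r'}$ distinguish updates from preservation, respectively, from the inter-round carry-over. Once this alignment is in place, the remaining arguments---topological sort to obtain distinct integer pids satisfying the $<$-constraints, and same-source-implies-same-pid for equality guards---are routine bookkeeping.
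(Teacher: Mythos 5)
Your proposal is correct and follows essentially the same route as the paper: your claimed correspondence between $\pathaut_r^\ph$ and the origin column of a register value is exactly the paper's Lemma~\ref{lem:pathaut} (stated there for pseudo runs rather than via a syntactic $\mathit{src}$ function, but equivalently since pids are distinct), and your completeness argument---acyclicity of the $<$-graph from $\lcpdl_<$, extension to a total order, choice of pids, then verification of each run condition using $\lcpdl_{\mathsf{col}}$ and $\lcpdl_=$---mirrors the paper's construction via the strict partial order $\prec$ and the unique pseudo run. The soundness direction likewise matches the paper's contrapositive case analysis.
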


\paragraph{From \DataPDLm to \ICPDL.}

\newcommand{\transl}[1]{\widetilde{#1}}

Next, we inductively translate every local $\DataPDLm(\DA)$ formula $\locform$
into an $\ICPDL(\DA)$ formula $\transl{\locform}$.  The translation is given in
Figure~\ref{fig:transl}.  As mentioned before, the first column in a table plays
the role of a marked process so that $\transl{\marked} =
\neg\existspath{\goleft}$.  The standard formulas are translated as expected.
Now, consider $\transl{~\smash{\existsp{r}{r'}{\pi}{\pi'}{<}}}$ (the remaining cases are
similar).  To ``prove'' $\existsp{r}{r'}{\pi}{\pi'}{<}$ at a given position in a
symbolic run, we require that there are a $\transl{\pi}$-path and a
$\transl{\pi}'$-path to coordinates $x$ and $x'$, respectively, whose registers
$r$ and $r'$ satisfy $r < r'$. To guarantee the latter, the pids stored in $r$ and
$r'$ have to go back to coordinates that are connected by a $\pi_<$-path.
Again, using converse, this can be expressed as a loop (cf.\
Figure~\ref{fig:existsless}).  Note that, hereby, $\pathaut_{r}^{2}$
and $\pathaut_{r'}^{2}$ refer to stage $\ph=2$, which reflects the fact
that $\DataPDL$ speaks about \emph{configurations} (determined after updates).
\begin{figure}[t]
\centering
{
\parbox[b]{0.65\textwidth}{
\centering
\begin{tabular}{l}
$\transl{\marked} = \neg\existspath{\goleft}$
\qquad
$\transl{s} = \gotocmd{s}$ ~for all $s \in \States$\\[0.5ex]

$\transl{\neg\locform} = \neg\transl{\locform}$
\quad
$\transl{\locform_1 \wedge \locform_2} = \transl{\locform_1} \wedge \transl{\locform_2}$
\quad
$\transl{\locform_1 \Rightarrow \locform_2} = \transl{\locform_1} \Rightarrow \transl{\locform_2}$
\quad
$\transl{\forallpath{\pi}\locform} = \forallpath{\transl{\pi}}\transl{\locform}$\\[0.5ex]

$\transl{\existsp{r}{r'}{\pi}{\pi'}{<}} = \loopform{\transl{\pi} \cdot (\pathaut_{r}^{2})^{-1} \cdot \pi_< \cdot \pathaut_{r'}^{2} \cdot (\transl{\pi}')^{-1}}$\\[0.5ex]

$\transl{\existsp{r}{r'}{\pi}{\pi'}{\le}} = \loopform{\transl{\pi} \cdot (\pathaut_{r}^{2})^{-1} \cdot (\pi_< + \stay)\cdot \pathaut_{r'}^{2} \cdot (\transl{\pi}')^{-1}}$\\[0.5ex]

$\transl{\existsp{r}{r'}{\pi}{\pi'}{=}} = \loopform{\transl{\pi} \cdot (\pathaut_{r}^{2})^{-1} \cdot \pathaut_{r'}^{2} \cdot (\transl{\pi}')^{-1}}$\\[0.5ex]

$\transl{\existsp{r}{r'}{\pi}{\pi'}{\neq}} = \loopform{\transl{\pi} \cdot (\pathaut_{r}^{2})^{-1} \cdot (\goleft^+ + \goright^+) \cdot \pathaut_{r'}^{2} \cdot (\transl{\pi}')^{-1}}$\\[0.5ex]

$\transl{\pi}$ is inductively obtained from $\pi$ by replacing tests $\test{\locform}$ by $\test{\transl{\locform}}$,\\[-0.3ex]
\qquad $\goright$ by $\hookrightarrow$, and $\goleft$ by $\hookleftarrow$
\end{tabular}
\caption{From \DataPDLm to LCPDL\label{fig:transl}}
}}
~~~
{
\parbox[b]{0.28\textwidth}{
\centering
\scalebox{0.9}{
\begin{gpicture}
\gasset{Nframe=y,Nh=1.4,Nw=1.4,AHLength=1.8,AHlength=1.5}
\unitlength=0.7mm
\node(current)(0,0){}
\node(r)(-20,20){}
\node(rp)(20,20){}
\node(p1)(-10,40){}
\node(p2)(10,40){}
\drawedge[curvedepth=0,AHnb=1](current,r){$\transl{\pi}$}
\drawedge[ELside=r,ELpos=60,curvedepth=-5,AHnb=1](current,rp){$\transl{\pi}'$}
\drawedge[ELside=r,ELpos=50,ELdist=-1.2,curvedepth=0,AHnb=1](rp,current){$(\transl{\pi}')^{-1}$}
\drawedge[ELside=r,ELpos=50,curvedepth=-5,AHnb=1](p1,r){$\pathaut_{r}^{2}$}
\drawedge[ELside=l,ELpos=50,curvedepth=0,AHnb=1](p2,rp){$\pathaut_{r'}^{2}$}
\drawedge[ELside=r,ELpos=60,ELdist=0.2,curvedepth=0,AHnb=1](r,p1){$(\pathaut_{r}^{2})^{-1}$}
\drawedge[ELside=l,ELpos=50,curvedepth=0,AHnb=1](p1,p2){$\pi_<$}
\end{gpicture}
}
\caption{$\transl{\existsp{r}{r'}{\pi}{\pi'}{<}}$\label{fig:existsless}}
}}
\end{figure}

\begin{lemma}\label{lem:lcpdl}
Let $\pict \in \{\pictofrun{\run} \mid \run$ is a run of $\DA\}$ and $\phi$ be a local $\DataPDLm(\DA)$ formula. We have $T,(1,0) \models \transl{\phi} \;\Longleftrightarrow\, \bigl(\run,1,(1,0) \models \phi \text{ for all runs } \run \text{ of } \DA \text{ such that } T_\run = T\bigr)$.
\end{lemma}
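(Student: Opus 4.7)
The plan is to prove the equivalence by structural induction on the local formula $\phi$, simultaneously maintaining the analogous statement for path formulas: for every run $\run$ with $T_\run = T$, we have $\Sem{\transl{\pi}}{T} = \Sem{\pi}{\run,1}$ on $\Coord{T} \times \Coord{T}$. The base cases and Boolean or path-quantifier cases are routine: $\transl{\marked} = \neg\existspath{\goleft}$ picks out column $1$, which is the marked process by the convention built into $\pictofrun{\run}$; $\transl{s} = \gotocmd{s}$ reads off the state of each process from the successor state of its transition; Boolean operators translate homomorphically; and the navigation relations $\stay, \godown, \goup$ match verbatim, while $\goright, \goleft$ use $\hookrightarrow, \hookleftarrow$ to simulate the ring's wrap-around correctly relative to the marked column. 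These reduce to straightforward unfolding of the definitions.

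The heart of the proof is an auxiliary invariant about the path automata $\pathaut_r^\ph$: for every run $\run$ with $T_\run = T$ and every pair $(i,0),(i',j') \in \Coord{T}$, we have $((i,0),(i',j')) \in \Sem{\pathaut_r^\ph}{T}$ if and only if the initial pid $p_i$ of column $i$ equals the value stored in register $r$ at stage $\ph$ of the transition at position $(i',j')$ in $\run$. This is because the component formulas $\localform{\cdot}{\cdot}$, $\msgform{\cdot}{\cdot}$, $\updform{\cdot}{\cdot}$, $\nextform{\cdot}{\cdot}$ of Figure~\ref{fig:pathform} encode precisely the ways a pid can propagate between stages in a concrete run (local retention, forwarded messages, register updates, round-to-round transfer), and since pids originate only in the initial configuration, every register value traces back to a column of origin.

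With this invariant, the forward direction of the four guard cases is direct. For $\existsp{r}{r'}{\pi}{\pi'}{=}$, a witnessing loop yields $x_1, x_2$ reached by $\pi, \pi'$ whose registers $r, r'$ trace to the same origin and hence carry the same pid in every compatible run. For $\neq$, the factor $\goleft^+ + \goright^+$ forces distinct origin columns and hence, by pid uniqueness in the ring, distinct values. For $<$, the $\pi_<$-chain between the origins $i_1, i_2$ exhibits a sequence of $<$-guards in $T$ that, by the consistency property $\lcpdl_<$ (part of $\lcpdl_\DA$ and satisfied by $T$ thanks to Lemma~\ref{lem:datapdl}), is realized in every compatible run; by transitivity $p_{i_1} < p_{i_2}$, so $\phi$ holds. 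The $\le$ case is identical up to allowing a trivial chain.

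The main obstacle is the converse direction for $<$ and $\le$, where the restrictions of $\DataPDLm$ become essential. Assume $\existsp{r}{r'}{\pi}{\pi'}{<}$ holds at $(1,0)$ in every run with $T_\run = T$. Unambiguity of $\pi, \pi'$ fixes witnesses $x_1, x_2$ uniquely from $T$, and tracing back with $\pathaut_r^2, \pathaut_{r'}^2$ gives origin columns $i_1, i_2$, so the assumption reads: every pid-assignment respecting the guards of $T$ (which, by Lemma~\ref{lem:datapdl}, are precisely the assignments giving rise to a run with symbolic abstraction $T$) satisfies $p_{i_1} < p_{i_2}$. If no $\pi_<$-path connected $i_1$ to $i_2$ in $T$, one could quotient columns by the equality-closure induced by $\pathaut_r^1 \cdot \test{r{=}r'} \cdot (\pathaut_{r'}^1)^{-1}$, form the partial order induced by the remaining $<$-guards on quotient classes (consistent by $\lcpdl_<$), and pick a linear extension placing $i_2$ before $i_1$; any corresponding injective pid-assignment refutes $p_{i_1} < p_{i_2}$, contradicting the assumption. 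Hence a $\pi_<$-path exists and $\transl{\phi}$ holds. This contrapositive construction is exactly why $\DataPDLm$ forbids $<$ and $\le$ under negation, implication-antecedents, and tests: otherwise, the universal quantifier over runs would require certifying the \emph{absence} of a forcing chain, a condition beyond what the symbolic abstraction can witness.
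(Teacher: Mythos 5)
Your overall architecture---structural induction on $\locform$ with a parallel statement for path formulas, the path-automaton invariant (the paper's Lemma~\ref{lem:pathaut}), the characterization of $\pi_<$ via a linear-extension/contrapositive argument (the paper's Lemma~\ref{lem:piless}), and the use of unambiguity to fix the witness coordinates from $T$ alone in the $<$/$\le$ cases---matches the paper's proof. But there is a genuine gap in the cases you dismiss as ``routine'': negation, the left-hand side of implication, and tests. The statement you induct on has the form $T,x \models \transl{\locform} \Leftrightarrow (\forall \run \in \Runs{T}: \run,1,x \models \locform)$. Negating both sides yields $T,x \models \neg\transl{\locform} \Leftrightarrow (\exists \run \in \Runs{T}: \run,1,x \not\models \locform)$, i.e., \emph{some} compatible run satisfies $\neg\locform$---not all of them, which is what the inductive step for $\neg\locform$ requires. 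The same existential/universal mismatch breaks the test case of your path-formula invariant (asserting $\Sem{\transl{\pi}}{T}=\Sem{\pi}{\run,1}$ for \emph{each} run $\run$ forces, for $\test{\locform}$, the truth value of $\locform$ to be identical across all runs with abstraction $T$, which your hypothesis does not provide) and the antecedent case of $\Rightarrow$.

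The paper closes this gap by proving a second, stronger statement simultaneously: for \emph{good} subformulas (those containing no $<$ or $\le$ guards), the equivalence $T_\run,x\models\transl{\locform} \Leftrightarrow \run,1,x\models\locform$ holds \emph{per run}, so the truth value is determined by the symbolic abstraction alone and negation, implication antecedents, and tests commute with the translation. The syntactic restrictions of \DataPDLm{} guarantee that exactly the good subformulas occur in those positions, so the stronger hypothesis is available precisely where needed. Your closing remark correctly intuits that the restriction is about not having to certify the absence of a forcing chain, but your proof never turns that intuition into a strengthened induction hypothesis, and without it the induction does not go through. You should add the run-independence statement for guard-free (good) subformulas and carry it through the induction alongside your main claim.
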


Using Lemmas~\ref{lem:datapdl} and \ref{lem:lcpdl}, we can now prove 
Lemma~\ref{lem:satisf} below. Together with Theorem~\ref{thm:icpdl}, the upper bound of Theorem~\ref{thm:main} follows.

\begin{lemma}\label{lem:satisf}
Let $\DA$ be a distributed algorithm, $\Phi=\Allrings{\locform} \in \DataPDLm(\DA)$, and $\bound \ge 1$. We have (a) $\DA \models \Phi \,\Longleftrightarrow\, L(\lcpdl_\DA \wedge \neg\transl{\locform}) = \emptyset$, and
(b) $\DA \models_\bound \Phi \,\Longleftrightarrow\, L_\bound(\lcpdl_\DA \wedge \neg\transl{\locform}) = \emptyset$.
\end{lemma}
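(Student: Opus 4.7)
The plan is to combine Lemmas~\ref{lem:datapdl} and~\ref{lem:lcpdl} with a rotation-invariance argument that absorbs the universal quantification over marked processes built into $\DA \models \Phi$.

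First I would establish that both the set of runs of $\DA$ and the semantics of $\DataPDLm$ are invariant under cyclic rotation of the ring. Given a ring $\ring = (n:p_1,\ldots,p_n)$, a marker $m \in \set{n}$, and an $\ring$-run $\run$, let $\ring^{(m)} = (n:p_m,p_{m+1},\ldots,p_n,p_1,\ldots,p_{m-1})$ and let $\run^{(m)}$ be the $\ring^{(m)}$-run obtained by cyclically shifting columns of $\run$ by $m-1$. Since $\DA$ is defined uniformly on every ring, $\run^{(m)}$ is also a run of $\DA$, and $\pictofrun{\run^{(m)}}$ is just the column shift of $\pictofrun{\run}$. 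A routine structural induction on $\locform$ and $\pi$ shows that $\run, m, (m,0) \models \locform$ iff $\run^{(m)}, 1, (1,0) \models \locform$: the only coordinate-sensitive atomic formula is $\marked$ (which gets relocated to position $1$), the horizontal operators $\goright,\goleft$ wrap cyclically in $\DataPDL$ and therefore commute with rotation, the vertical operators and $\stay$ are trivially invariant, and guards depend only on intrinsic pid values that travel with the columns. Consequently, $\DA \models \Phi$ is equivalent to: $\run, 1, (1,0) \models \locform$ for every run $\run$ of $\DA$. Rotation preserves the length of a run, so the same reformulation works for $\DA \models_\bound \Phi$.

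Now I would chain the lemmas:
\begin{align*}
\DA \models \Phi
&\iff \forall \run \text{ run of } \DA\colon\ \run,1,(1,0) \models \locform\\
&\iff \forall \pict \in L(\lcpdl_\DA)\ \forall \run' \text{ with } \pictofrun{\run'} = \pict\colon\ \run',1,(1,0) \models \locform\\
&\iff \forall \pict \in L(\lcpdl_\DA)\colon\ \pict,(1,0) \models \transl{\locform}\\
&\iff L(\lcpdl_\DA \wedge \neg\transl{\locform}) = \emptyset.
\end{align*}
The second equivalence uses Lemma~\ref{lem:datapdl} to replace the quantification over runs by a quantification over symbolic runs in $L(\lcpdl_\DA)$ (the forward direction rewrites $\run$ through $\pictofrun{\run}$; the backward direction specializes $\run' \df \run$). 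The third equivalence is Lemma~\ref{lem:lcpdl}, which applies precisely because every $\pict \in L(\lcpdl_\DA)$ is of the form $\pictofrun{\run'}$. The last step is by definition of $L(\cdot)$. Part~(b) follows by the very same computation with $L$ replaced by $L_\bound$ throughout: the height of $\pictofrun{\run}$ equals the length of $\run$, and Step~1 preserves this bound.

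The main obstacle is verifying the rotation-invariance claim underpinning Step~1: one has to check that $\run,m,(i,j) \models \locform$ and $((i,j),(i',j')) \in \Sem{\pi}{\run,m}$ are preserved under relabeling positions $i \mapsto ((i-m) \bmod n) + 1$. Because $\DataPDLm$ puts all positional information into the single atom $\marked$ and because $\goright,\goleft$ are cyclic, this is a mechanical induction. Everything after Step~1 is bookkeeping that follows immediately from the two prior lemmas and the definitions of $L$ and $L_\bound$.
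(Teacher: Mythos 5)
Your proposal is correct and follows essentially the same route as the paper's proof: both reduce $\DA \models \Phi$ to the statement that every run satisfies $\locform$ at the marked process $1$ via rotation invariance of runs and of $\DataPDLm$ semantics (the paper phrases this as ``$\locform$ cannot distinguish isomorphic rings, so shift the ring until $m$ arrives at position $1$''), and then chain Lemmas~\ref{lem:datapdl} and~\ref{lem:lcpdl} to pass to symbolic runs in $L(\lcpdl_\DA)$ and conclude emptiness of $L(\lcpdl_\DA \wedge \neg\transl{\locform})$, with part~(b) obtained by restricting table height and run length to $\bound$. Your version merely makes the rotation argument more explicit and presents the two directions as a single chain of equivalences.
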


\section{Conclusion}\label{sec:conclusion}

In this paper, we provided a conceptually new approach to the verification of distributed algorithms that is robust against small changes of the model. 

Actually, we made some assumptions that simplify the presentation, but are not crucial to the approach and results. For example, we assumed that an algorithm is synchronous, i.e., there is a global clock that, at every clock tick, triggers a round, in which every process participates. This can be relaxed to handle communication via (bounded) channels. Second, messages are pids, but they could contain message contents from a finite alphabet as well. Though the restriction to the class of rings is crucial for the complexity of our algorithm, the logical framework we developed is largely independent of concrete (ring) architectures. Essentially, we could choose any class of architectures for which \ICPDL is decidable.

We leave open whether round-bounded model checking can deal with full \DataPDL, or with properties of the form $\Allexistsrings{\locform}$, which are branching-time in spirit.




\appendix


\clearpage

\section{Proof of Theorem~\ref{thm:undecidable}}
\newcommand{\rleader}{r_\text{leader}}
\newcommand{\rstate}{r_\text{state}}

The following remark will be exploited in the proof of Theorem~\ref{thm:undecidable} and for the lower-bound proof of Theorem~\ref{thm:main}.

\begin{myremark}\label{rem:messagesAlphabet}
Note that the only way to communicate information from one process to another is by exchanging and comparing pids. However, we can simulate the exchange of messages from a \emph{finite} alphabet $B=\{b_1,\ldots,b_{k}\}$ that can be compared for equality. 

Assume a ring $\ring=(n:p_1,\ldots,p_n)$. A possible protocol for simulation can employ a leader election algorithm first. Afterwards, the leader identifies $k$ distinct pids (say the $k$ closest pids on its left), and transmits them to all other processes who keep them in dedicated registers $\hat{r}_1, \ldots, \hat{r}_k$. After this initialization phase, the actual simulation can take place with the convention that message $b_j$ is identified by the pid in $\hat{r}_j$ (of any process). In order for the simulation to work, we have to require that $n \ge k$.

The drawback of the above protocol is that the initialization phase requires $\log(n)$ rounds. Below we describe another protocol where the initialization can be achieved in $k$ rounds. 

Assume a ring $\ring=(n:p_1,\ldots,p_n)$ and that $n \ge k$. 
Each process has $k+1$ dedicated registers $\hat{r}_0, \ldots, \hat{r}_k$.  After the initialization (described below), for each process $i$, register $\hat{r}_j$ holds $p_{i-j}$ (modulo $n$). Thus $\hat{r}_j$ of process $i$ holds the same value as $\hat{r}_{j+1}$ of process $i+1$. 

\textbf{Conventions.} To send message $b_j$ to left, a process  simply sends the contents of $\hat{r}_j$. On the other hand, to send message $b_j$ to right, it sends the contents of $\hat{r}_{j-1}$.  When a process receives a message from the left, it compares it with registers $\hat{r}_1, \ldots, \hat{r}_k$, and if it matches $\hat{r}_j$ then the message is interpreted as $b_j$. On receiving from right, on contrary, it is compared to $\hat{r}_0, \ldots \hat{r}_{k-1}$,  and if it matches $\hat{r}_j$ then the message is interpreted as $b_{j+1}$.

\textbf{Initialization.} It uses $k+1$ control states $s_0, \ldots, s_k$. At $s_0$, all registers have self pid. This fills in the correct value for $\hat{r}_0$. In round $j$, a process moves from $s_{j-1}$ to $s_j$, sending $\hat{r}_{j-1}$ to the right and receiving in $\hat{r}_j$ from the left.

Notice that this simulation cannot be used to forward a message to another process using $\fwdcmd$-commands in between. However, the lower bound proofs presented below do not rely on $\fwdcmd$-commands.
\myqed
\end{myremark}

\begin{proof}[Proof of Theorem~\ref{thm:undecidable}]
We give a reduction from the halting problem of Turing machines. It is equivalent to checking whether a given Turing machine $\tm$ can never reach a specific target state (call it \halt) on any (some) input. Let $\stm$ be the set of control states of a Turing machine. Let $\btm$ be the tape alphabet of the Turing Machine. Wlog., we assume that the $\tm$ starts on the empty tape. From the empty tape, it may simulate an arbitrary input using non-determinism.  We also assume that, on reaching the state $\halt$,  it writes $\halt$ in the current cell. Thus $\halt \in \stm$ and $\halt \in \btm$.  We describe the distributed algorithm $\Dtm$.

 Intuitively, the number of processes in the ring gives an upper bound to the space needed by the Turing machine. Every process will correspond to a cell in the Turing machine's work tape. Since there is no specific starting process for a ring, we run a leader election algorithm first, and the leader will act as the leftmost cell of the tape. The $i$-th process to the right of the leader acts as the $i$-th tape cell. 
 The local state of processes indicate the corresponding cell contents. It also indicates whether the head is currently present at the respective cell. 
Thus the local states are pairs of the form $(\sym, \head)$ where $\sym \in \btm$ indicates the content of a tape cell, and $\head$ is a boolean value denoting the presence of the head of the Turing machine at the current cell. Initially, only the leader process has the $\head$ bit set $\true$.  In the simulation, only the process with $\head = \true$ can send messages, and once it emits a message, the $\head$ bit is turned $\false$. The process that receives the message turns the $\head$ bit $\true$. The message alphabet (cf. Remark~\ref{rem:messagesAlphabet}) is $\stm $ which denotes the target control state upon simulating one transition of the Turing machine. The control state of the $\tm$ is stored in a designated register $\rstate$.

We describe the construction in detail now. There are two preliminary phases to facilitate  the actual simulation. In phase 1, the processes agree upon the message alphabet $\stm$ as described in Remark~\ref{rem:messagesAlphabet}. This phase requires $|\stm  | +1$ registers and local states. Recall that the ring must have size bigger than $|\stm|$ for simulating the encoding described in Remark~\ref{rem:messagesAlphabet}. Otherwise, the distributed algorithm will be blocked in this phase. However, our reduction would still work because of two reasons.  First,  our specification will be true for rings smaller than this threshold. This is, in a sense, reducing  the model-checking problem with $\Allrings{}$ prefix to another model-checking problem where the prefix is rephrased to ``All rings of size bigger than $\ell$'' (here, $\ell = |\stm|$). Second, the run which uses only a small amount of tape can be simulated on a big tape. (It maintains the unnecessary cells on the right with the empty tape symbol always. In our simulation these processes will be in the state $(\$,\false)$.) Notice that, the number of processes in the ring is only an upper bound of (rather than exact) space needed by the Turing machine. 

Phase 2 simulates a leader-election protocol, say, the Dolev-Klawe-Rodeh algorithm. The pid of the leader is stored in all processes in a special register $\rleader$. Recall that the leader process will act as the leftmost cell of the tape. A process can always check whether it is the leftmost by comparing the value of $\rleader$ to the register $\idreg$. This check will be used in guards later in transitions involving moving the head of $\tm$ to the left.

Once phase 2 is completed, the configuration of the ring proceeds to represent the initial configuration of $\tm$.  For this, all processes other than the leader will move to the state $(\$,\false)$, i.e., representing the empty tape cell and indicating the absence of the head. The leader process will move to the state $(\$, \true)$.  On taking this transition, the register $\rstate$ of all the processes are set to hold the initial state of the Turing machine.

The simulation of the Turing machine works as follows. Consider a transition of the Turing machine which checks that the current state is $s$ and the current cell contains $a$,  updates the cell content to $b$, moves the head to the left and updates the control state to $s'$. The distributed algorithm will have a transition which moves from local state $(a, \true )$ to $(b,\false)$ which also (i) ensures (by a guard) that $\rstate$ contains the encoding of  $s $, (ii) ensures (by a guard) that it is not the leftmost cell ($\rleader \neq \idreg$), and (iii) sends the encoding of $s'$ to the left. 
For this transition to take place, there are complementary transitions at the receive end which go from $(\text{-}, \false)$ to $(\text{-}, \true)$   upon receiving a value from a neighbor (left or right) to its register $\rstate$. In fact, such a receive transition is enabled for all processes in all the states.
Other transitions of the Turing machine are also implemented similarly. Notice that message transmissions are performed by a process only if $\head = \true$. Notice also that the leader process does not send to left. Also, there are no  forwarding states.

There is actually one subtlety here that arises from the fact that receptions are non-blocking. We have to make sure that a process is aware whether a ``real'' message was received or not. To do so, we introduce a register $r_\bot$, containing a special message $\bot$. Note that the first preliminary phase must indeed be executed for an extended message alphabet that also includes the special symbol $\bot$. For incoming messages, a process will use a special register $r_\mathsf{in}$, which initially contains $\bot$. After executing a receive action, a process will check whether $r_\mathsf{in} \neq r_\bot$, which makes sure that a message has indeed arrived. The subsequent update will then execute $\updcmd{r_\mathsf{in}}{r_\bot}$ to reset $r_\mathsf{in}$.

Finally, the specification $\phi_\tm$ checks that  there is no process  in the state  $(\halt, \true)$. Thus, if the model-checking problem answers negatively, then there is a ring and a run which encodes a valid Turing machine computation on a tape of size bigger than $\stm$ (which also simulates any smaller size tape) and still reaches the \halt state:
\[ \phi_\tm = \Allrings{ \forallpath{{\godown^\ast}}\neg(\halt, \true) }\]
This concludes the proof of Theorem~\ref{thm:undecidable}.
\end{proof}


\section{Proof of Lower bound of Theorem~\ref{thm:main}}
\label{app:LowerboundThmMain}
\newcommand{\Aut}{\mathcal{A}}
\newcommand{\instate}{\textsf{init}}
\newcommand{\finstates}{\textsf{F}}
\newcommand{\Lang}{L}
\newcommand{\encof}{\textsf{EncOf}}
\newcommand{\decof}{\textsf{DecOf}}
\begin{proof} To prove the lower bound, we give a polynomial reduction from the intersection-emptiness problem of finite state automata. That is, given $k$ finite-state automata $\Aut_1, \ldots, \Aut_k$ over a finite alphabet $\Sigma$,  where $\Aut_i = (Q_i, \Delta_i, \instate_i, \finstates_i)$, whether $\bigcap_i \Lang(\Aut_i) = \emptyset$? This problem is known to be PSPACE-complete. 

We will need only unidirectional rings for our reduction.
We construct the distributed algorithm $\DA$ as follows. 

The number of processes in the ring  corresponds to  the length of a candidate word accepted by all the automata $\Aut_i$. Each process thus corresponds to a position in the word. The local state of the process remembers the letter from $\Sigma$ at the respective position. The message contents will be the states of the automata. A preliminary phase sets the message alphabet as per Remark~\ref{rem:messagesAlphabet}. At round $i$ after the preliminary phase, all the processes try to simulate a transition of automaton $\Aut_i$ on the respective position. 
We give the details below. 

In a preliminary phase, the distributed algorithm establishes the finite message alphabet $B = \bigcup_i Q_i$. This requires $|B| +1$ states, registers, and rounds. In case the ring is smaller than $|B|$, the distributed algorithm will be blocked in this phase. However, our reduction would still work because of two reasons.  First,  our specification will be true for rings smaller than this threshold. Second, if a word is accepted by all the automata $\Aut_i$, then acceptance of that word can be simulated on arbitrarily large rings. This will become clear below when we give the actual construction.


The register used for sending the value of a state $s$ to the right is denoted $\encof(s)$. On receiving a value from the left, let $\decof(s)$ be the register against which it is compared to ensure that the received value corresponds to state $s$.

After the preliminary phase, a process non-deterministically moves to a local state from the set $(\Sigma \cup \{\$\}) \times \set{1}$. The special symbol $\$$ marks that a candidate word may start at the right of this process and end at the left of this process. The local state may also remember an index $i$ from $\set{k}$, indicating that it is currently simulating $\Aut_i$. For each $a \in \Sigma$ and $i \le k$, we have a transition of the form
$$\langle (a,i)\textup{:}~\sendright{\encof(s')} \nextcmd \recleft{r} \nextcmd  r = {\decof(s)} \nextcmd \gotocmd{(a,i+1)}\rangle$$
if $(s,a,s') \in \Delta_i$. Further we have 
$$\langle (\$,i)\textup{:}~\sendright{\encof(\instate)} \nextcmd \recleft{r} \nextcmd  r = {\decof(f)} \nextcmd \gotocmd{(\$,i+1)}\rangle$$
if $f \in \finstates_i$.
Notice that the symbol associated to a process does not change in any of these transitions. 

Thus, the number of rounds needed by the distributed algorithm is $\bound = |B| +m + 1$, which is polynomial in the size of the input to intersection emptiness problem of finite state automata. The size of the distributed algorithm $\DA$ is also polynomial. 

Finally,  the $\DataPDLm(\DA)$ formula states that  a state of the form $(\$, k+1)$  cannot be reached:
\[ \phi_m = \Allrings{ \forallpath{{\godown^\ast}}\neg(\$, k+1) } \]

Notice that, if the bounded model checking answers no, then there are a ring, a run, and a marked process $m$ such that $m$ eventually reaches the state $(\$, k+1)$. This means that, on all states $(\$, i)$, $m$ has received a state $f_i \in \finstates_i$. Let $m'$ be 
 the first process on the left of $m$ which has a state of the form $(\$,i)$. Note that $m'$ can be same as $m$.  The word represented by the states of the processes between $m'$ and $m$ is in $\bigcap_i \Lang(\Aut_i)$. Note that, even if this is the empty word (that is, $m'$ is the left neighbor of $m$), it must be in the intersection since $\instate_i \in \finstates_i$ for every automaton $\Aut_i$. On the other hand, if the intersection is non-empty, there is a run that violates the specification.

 Thus, the bounded model checking of $\DA$ answers yes if, and only if, the intersection of the $L(\Aut_i)$ is empty. 
 
 \medskip
 
 This proves the PSPACE lower bound stated in Theorem~\ref{thm:main}.
 \end{proof}

 
\section{Proof of Theorem~\ref{thm:icpdl}}

We can restrict to pictures of height $k=\bound$ (rather than $k \le \bound$), since checking satisfiability for every height separately does not change the complexity. We reduce the problem to words, for which \ICPDL satisfiability is known to be PSPACE-complete \cite{Goeller2009} (since formulas from \ICPDL have bounded intersection width). A picture $\pict=(n,k,\lambda)$ is considered as the word $\pict[1] \cdot \ldots \cdot \pict[n] \in \Trans^+$.
Thus, the columns are written horizontally rather than vertically. When translating an \ICPDL formula over tables into an \ICPDL formula over words, going to the left or right involves some modulo counting: $\goleft$ is translated to $\goleft^{k+1}$, and $\goright$ is translated to $\goright^{k+1}$. An additional difficulty stems from the fact that we allow automata as path expressions, but it is straightforward to integrate them into the construction of an alternating two-way automaton from \cite{Goeller2009}.


 
\newcommand{\TDA}{\mathcal{T}_{\DA}}

\section{Proof of Lemma~\ref{lem:datapdl}}

Let us first introduce some notation.
Let $\TDA = \{\pictofrun{\run} \mid \run$ is a run of $\DA\}$.
For a table $T \in \Pictures$, let $\Runs{\pict}=\{\run \mid \run$ is run of $\DA$ such that $T_\run = T\}$.

A \emph{pseudo} ($\ring$-)\emph{run} of $\DA$ is like an ($\ring$-)run $\smash{\run = {\conf_0 \confrel{t^1} \conf_1 \confrel{t^2} \ldots \confrel{t^k} \conf_k}}$, but conditions 1.-3.\ are not checked. That is, target and source states are not necessarily matching, and $=$- and $<$-guards are ignored. Thus, every run is a pseudo run, but not vice versa.
We define $T_\run$ and $\Coord{\run}$ in exactly the same way as for runs.

Given a (pseudo) run ${\run = {\conf_0 \confrel{t^1} \conf_1 \confrel{t^2} \ldots \confrel{t^k} \conf_k}}$ of $\DA$ (where $\conf_j = (s_1^j,\ldots,s_n^j,\regmap_1^j,\ldots,\regmap_n^j)$) and $(i,j) \in \Coord{\run}$, we set $\run_i^j = \regmap_i^j$ (abusing notation). Moreover, for $j \ge 1$, $\hat{\run}_i^j = \hat{\regmap}_i^j$ defines the corresponding $j$-th intermediate register assignment, which was defined in Section~\ref{sec:algorithms} to obtain the mapping $\regmap_i^j$. Finally, we set $\hat{\run}_i^0 = \run_i^0$.

\medskip

To prove Lemma~\ref{lem:datapdl}, we will need two further lemmas:

\begin{lemma}\label{lem:pathaut}
For all pseudo runs $\run$ of $\DA$, coordinates $(i,j),(i',j') \in \Coord{\run}$, and registers $r \in \Reg$, the following hold:
\begin{itemize}\itemsep=0.5ex
\item[(a)] $((i,j),(i',j')) \in \Sem{\pathaut_{r}^{1}}{T_\run} ~\Longleftrightarrow~ \bigl(\run_{i}^{0}(\idreg) = \hat{\run}_{i'}^{j'}(r) ~\wedge~ j=0\bigr)$
\item[(b)] $((i,j),(i',j')) \in \Sem{\pathaut_{r}^{2}}{T_\run} ~\Longleftrightarrow~ \bigl(\run_{i}^{0}(\idreg) = \run_{i'}^{j'}(r) ~\wedge~ j=0\bigr)$
\end{itemize}
\end{lemma}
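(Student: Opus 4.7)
Plan: I would strengthen the statement to cover $h = 0$ as well, proving uniformly that for every $h \in \{0,1,2\}$, $((i,j),(i',j')) \in \Sem{\pathaut_r^h}{T_\run}$ iff $j = 0$ and $p_i = v_h(i',j',r)$, where $v_1(i',j',r) = \hat{\run}_{i'}^{j'}(r)$, $v_2(i',j',r) = \run_{i'}^{j'}(r)$, and $v_0(i',j',r) = \run_{i'}^{j'-1}(r)$ when $j' \ge 1$ (else $\run_{i'}^{0}(r)$). Parts (a) and (b) follow by specializing $h$. The invariant I keep in mind is: whenever the automaton reaches state $(h, r)$ having walked from $(i, 0)$ to $(i', j')$ in $T_\run$, the initial pid $p_i$ equals the contents of register $r$ at stage $h$ of $(i', j')$.

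For the forward direction I would induct on the length of an accepting automaton path. The base case consists of the single transition $\iota \xrightarrow{\test{\neg\existspath{\goup}}} (0, r_0)$: the top-row test forces $j = 0$, the identity relation gives $(i',j') = (i, 0)$, and $v_0(i, 0, r) = \run_i^0(r) = p_i$ since every register initially holds the self pid. The inductive step strips off the last path formula, which is some $\theta_{r'', r}^{h'', h}$. For each shape of $\theta$ I verify a one-step propagation fact: $\localform{r}{r}$ encodes the absence of any reception into $r$ at the current position, so stage $0$ equals stage $1$; $\msgform{r}{r'}$ witnesses a send of $r$ forwarded across $\fwdcmd$-labeled processes, culminating in a receive into $r'$, i.e., an instance of $\auxright{r}{\cdot}{r'}{\cdot}$ or $\auxleft{r}{\cdot}{r'}{\cdot}$; $\updform{r}{r'}$ captures either $\updcmd{r'}{r}$ or, when $r = r'$, the absence of any update targeting $r$; and $\nextform{r}{r}$ uses that the stage-2 value at $(i', j'-1)$ equals the stage-0 value at $(i', j')$. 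Combining the IH with the relevant clause from the semantics of $\hat{\run}_i^j$ or $\run_i^j$ closes the case.

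For the backward direction I would induct on $(j', h)$ in the lexicographic order with $0 < 1 < 2$. The base case $(0, 0)$ gives $v_0(i', 0, r) = p_{i'}$, so $p_i = p_{i'}$ forces $i = i'$ (pids are unique), and the direct path $\iota \to (0, r)$ suffices. For $h = 2$, I inspect whether $t_{i'}^{j'}$ contains an update $\updcmd{r}{r_s}$: if so, $v_2 = \hat{\run}_{i'}^{j'}(r_s)$ and I extend an IH path at $(i', j', r_s, 1)$ via $\updform{r_s}{r}$; otherwise I extend via $\updform{r}{r}$. For $h = 1$, I check whether some $\auxright{r_s}{i_s}{r}{i'}$ or $\auxleft{r_s}{i_s}{r}{i'}$ holds: if so, $v_1 = \run_{i_s}^{j'-1}(r_s) = v_0(i_s, j', r_s)$ and I extend via $\msgform{r_s}{r}$; otherwise $v_1 = v_0$ at $(i', j', r)$ and I extend via $\localform{r}{r}$. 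For $h = 0$ with $j' \ge 1$, I rewrite $v_0 = v_2$ at $(i', j'-1, r)$ and extend via $\nextform{r}{r}$. Each extension strictly decreases $(j', h)$, so the induction is well-founded.

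The delicate step will be faithfully aligning $\msgform{r}{r'}$ with the message-propagation semantics given by $\auxright{}{}{}{}$ and $\auxleft{}{}{}{}$. The set $\mathit{Between}(i, j)$ already wraps around the ring, so the corresponding path in $T_\run$ must traverse this wraparound using the ``roundabout'' operator $\hookrightarrow$ (respectively $\hookleftarrow$), while simultaneously certifying a $\fwdcmd$-label at every intermediate column at row $j'$ through the $\test{\passiveform}$ subformulas. Edge cases at $j' = 0$ are harmless thanks to the dummy transition $\emptytrans$: its empty send, receive, and update force all three stages to coincide with $\run_{i'}^0$, so the base case and the top-row handling of $h = 0$ remain consistent with the $h = 1, 2$ invariants.
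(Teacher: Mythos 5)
Your proof is correct and follows essentially the same route as the paper's: both generalize the statement to a stage $h=0$ automaton, establish a one-step correspondence between pid propagation and the path formulas $\theta_{r,r'}^{h,h'}$ (the paper factors this through an auxiliary flow relation, you inline it), and chain these steps by induction. The only notable difference is that your backward direction makes the well-founded induction on $(j',h)$ explicit where the paper merely appeals to the semantics of $\DA$ to produce the transmission chain.
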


\begin{proof}
Let the pseudo $\ring$-run in question be given by
$\smash{\run = {\conf_0 \confrel{t^1} \conf_1 \confrel{t^2} \ldots \confrel{t^k} \conf_k}}$
where $t^j = (t_1^j,\ldots,t_n^j) \in \Trans^n$.

To be able to perform an induction, we show a more general statement that captures both (a) and (b). To this aim, we define the automaton $\pathaut_{r}^{0}$ in the expected manner, i.e., where the only final state is $(0,r)$.
We will show, for all $\ph \in \{0,1,2\}$,
\begin{align}
((i,j),(i',j')) \in \Sem{\pathaut_{r}^{\ph}}{T_\run} ~\Longleftrightarrow~ \bigl(\run_{i}^{0}(\idreg) = \run_{i'}^{j'}[\ph](r) ~\wedge~ j=0\bigr)\,.
\end{align}
Here, $\run_{i'}^{j'}[2]$ refers to $\run_{i'}^{j'}$ and $\run_{i'}^{j'}[1]$ refers to $\hat{\run}_{i'}^{j'}$ (recall that $\hat{\run}_{i'}^0 = \run_{i'}^0$). For $j' \ge 1$, we let
$\run_{i'}^{j'}[0](r)$ refer to the value of $r$ at position $(i',j')$ before reception.
Finally, we set $\run_{i'}^{0}[0] = \run_{i'}^{0}[1] ( = \run_{i'}^0)$.

\newcommand{\Confs}{\mathit{Conf}}

Before we come to the actual proof of (1), we define the relation
\[{\longrightarrow}_\run \subseteq \Confs \times \{\mathit{loc},\mathit{upd},\mathit{next},\mathit{msg}\} \times \Confs\]
where $\Confs = \Coord{\run} \times \{0,1,2\} \times \Reg$. The idea is that $\longrightarrow_\run$ captures the flow of pids in $\run$. We let $\longrightarrow_\run$ be the least relation satisfying the following:
\begin{itemize}
\item $(i,j,r,0) \xrightarrow{\mathit{loc}}_\run (i,j,r,1)$ if there are no $r',i'$ such that $\auxright{r'}{i'}{r}{i}$ (in step $\conf_{j-1} \confrel{t^j} \conf_j$)
\item $(i,j,r,1) \xrightarrow{\mathit{upd}}_\run (i,j,r',2)$ if $r \neq r'$ and $(\updcmd{r'}{r}) \in t_i^j$, or $r=r'$ and $(\updcmd{r}{r''}) \not\in t_i^j$ for all $r'' \neq r$
\item $(i,j,r,2) \xrightarrow{\mathit{next}}_\run (i,j+1,r,0)$
\item $(i,j,r,0) \xrightarrow{\mathit{msg}}_\run (i',j,r',1)$ if $\auxright{r}{i}{r'}{i'}$ or $\auxleft{r}{i}{r'}{i'}$ (in step $\conf_{j-1} \confrel{t^j} \conf_j$)
\end{itemize}
Note that $(i,j,r,\ph) \xrightarrow{\theta}_\run (i',j',r',\ph')$ immediately implies $\run_{i}^{j}[\ph](r)=\run_{i'}^{j'}[\ph'](r')$.
We will show that, moreover, we have
\begin{align}
(i,j,r,\ph) \xrightarrow{\theta}_\run (i',j',r',\ph') ~\Longleftrightarrow~ ((i,j),(i',j')) \in \Sem{\theta_{r,r'}^{\ph,\ph'}}{T_\run}
\end{align}
To prove this, we distinguish four cases:
\begin{itemize}\itemsep=1.5ex
\item Suppose $\theta = \mathit{loc}$. Then, we can assume $\ph = 0$ and $\ph' = 1$. We have
\[
\begin{array}{rl}
& (i,j,r,0) \xrightarrow{\theta}_\run (i',j',r',1)\\[0ex]
\Longleftrightarrow & r=r' ~\wedge~ (i,j)=(i',j') ~\wedge~
\text{$\neg\exists \bar{r},\bar{i}$ such that $\auxright{\bar{r}}{\bar{i}}{r}{i}$ (in step $\conf_{j-1} \confrel{t^j} \conf_j$)}\\[0.5ex]
\Longleftrightarrow &  r=r' ~\wedge~ ((i,j),(i',j')) \in \Sem{\test{\bigwedge_{\bar{r} \in \Reg}\neg\existspath{(\msgform{\bar{r}}{r})^{-1}}}}{T_\run}\\[0.5ex]
\Longleftrightarrow & ((i,j),(i',j')) \in \Sem{\mathit{loc}_{r,r'}^{0,1}}{T_\run}
\end{array}
\]

\item Suppose $\theta = \mathit{upd}$. We can assume $\ph = 1$ and $\ph' = 2$.
We distinguish two subcases.
\begin{enumerate}\itemsep=1ex
\item Suppose $r \neq r'$. Then, we have
\[
\begin{array}{rl}
& (i,j,r,1) \xrightarrow{\theta}_\run (i',j',r',2)\\[0.5ex]
\Longleftrightarrow & (i,j) = (i',j') ~\wedge~ (\updcmd{r'}{r}) \in t_{i'}^{j'}\\[0.5ex]
\Longleftrightarrow & ((i,j),(i',j')) \in \Sem{\test{\updcmd{r'}{r}}}{T_\run}\\[0.7ex]
\Longleftrightarrow & ((i,j),(i',j')) \in \Sem{\mathit{update}_{r,r'}^{1,2}}{T_\run}
\end{array}
\]

\item Suppose $r = r'$. Then,
\[
\begin{array}{rl}
& (i,j,r,1) \xrightarrow{\theta}_\run (i',j',r',2)\\[0.5ex]
\Longleftrightarrow & (i,j) = (i',j') ~\wedge~ (\updcmd{r}{\bar{r}}) \not\in t_{i'}^{j'} \text{ for all } \bar{r} \neq r\\[0.5ex]
\Longleftrightarrow & ((i,j),(i',j')) \in \Sem{\test{ \bigwedge_{\bar{r} \neq r} \neg(\updcmd{r}{\bar{r}})}}{T_\run}\\[0.7ex]
\Longleftrightarrow & ((i,j),(i',j')) \in \Sem{\mathit{update}_{r,r'}^{1,2}}{T_\run}
\end{array}
\]
\end{enumerate}

\item Suppose $\theta = \mathit{next}$. We can assume $\ph = 2$ and $\ph' = 0$. We have
\[
\begin{array}{rl}
& (i,j,r,2) \xrightarrow{\theta}_\run (i',j',r',0)\\[0.8ex]
\Longleftrightarrow & r = r' ~\wedge~ i=i' ~\wedge~ j' = j+1\\[0.5ex]
\Longleftrightarrow & r=r' ~\wedge~ ((i,j),(i',j')) \in \Sem{\godown}{T_\run}\\[0.7ex]
\Longleftrightarrow & ((i,j),(i',j')) \in \Sem{\mathit{next}_{r,r'}^{2,1}}{T_\run}
\end{array}
\]
\end{itemize}

We are now ready to prove (1).

\medskip
\noindent
($\Rightarrow$):
First note that $((i,j),(i',j')) \in \Sem{\pathaut_{r}^{\ph}}{T_\run}$ always implies $j=0$, since the automaton has to read $\test{\neg{\existspath{\goup}}}$ before it can accept at all (its initial state $\iota$ is not a final state).

Consider an (accepting) execution
\[\iota \xrightarrow{\pi_1} (r_1,\ph_1) \xrightarrow{\pi_2} \ldots \xrightarrow{\pi_\ell} (r_\ell,h_\ell)=(\ph,r)\]
of $\pathaut_r^h$, with $\ell \ge 1$, $\pi_1 = \test{\neg\existspath{\goup}}$, and $\pi_l = (\theta_l)_{r_{l-1},r_l}^{\ph_{l-1},\ph_l}$ for all $l \in \{2,\ldots,\ell\}$, connecting $(u,0)$ with $(i,j)$.
That is, $((u,0),(i,j)) \in \Sem{\pi_1 \cdot \ldots \cdot \pi_\ell}{T_\run}$. We have to show $\run_{u}^{0}(\idreg) = \run_{i}^{j}[\ph](r)$.

There are positions $(u,0)=(i_0,j_0),(i_1,j_1),\ldots,(i_\ell,j_\ell)=(i,j) \in \Coord{\run}$ such that $((i_{l-1},j_{l-1}),(i_{l},j_{l})) \in \Sem{\pi_l}{T_\run}$ for all $l \in \set{\ell}$. By (2), we obtain
\[(i_1,j_1,r_1,\ph_1) \xrightarrow{\theta_2}_\run (i_2,j_2,r_2,\ph_2) \xrightarrow{\theta_3}_\run \ldots \xrightarrow{\theta_\ell}_\run (i_\ell,j_\ell,r_\ell,\ph_\ell)\,.\]
This implies $\run_{i_1}^{j_1}[\ph_1](r_1) = \run_{i_\ell}^{j_\ell}[\ph_\ell](r_\ell)$, which equals $\run_{i}^{j}[\ph](r)$.
Since $\pi_1 = \test{\neg\existspath{\goup}}$, we also have $(u,0)=(i_1,j_1)$ and, therefore, $\run_{u}^{0}(\idreg) = \run_{i_1}^{j_1}[\ph_1](r_1)$. We conclude $\run_{u}^{0}(\idreg) = \run_{i}^{j}[\ph](r)$.

\medskip
\noindent
($\Leftarrow$):
Suppose $\run_{u}^{0}(\idreg) = \run_{i}^{j}[\ph](r)$. We will show that $((u,0),(i,j)) \in \Sem{\pathaut_{r}^{\ph}}{T_\run}$.

By the semantics of $\DA$, pid $\run_{u}^{0}(\idreg)$ has to be transmitted along transitions or messages. Thus, there are $\ell \ge 1$, positions $(i_1,j_1),\ldots,(i_\ell,j_\ell)=(i,j) \in \Coord{\run}$, registers $r_1,\ldots,r_\ell=r$, stages $0=\ph_1,\ldots,\ph_\ell=\ph \in \{0,1,2\}$, and
$\theta_2,\ldots,\theta_\ell \in \{\mathit{loc},\mathit{upd},\mathit{next},\mathit{msg}\}$
such that
\begin{itemize}
\item $((u,0),(i_1,j_1)) \in \Sem{\neg\existspath{\goup}}{T_\run}$ (therefore, $(u,0)=(i_1,j_1)$), and

\item $(i_{l-1},j_{l-1},r_{l-1},\ph_{l-1}) \xrightarrow{\theta_\ell}_\run (i_{l},j_{l},r_{l},\ph_{l})$
for all $l \in \{2,\ldots,\ell\}$.
\end{itemize}
By (2), we have
\[((i_{l-1},j_{l-1}),(i_l,j_l)) \in \Sem{(\theta_l)_{r_{l-1},r_l}^{\ph_{l-1},\ph_l}}{T_\run}\]
for all $l \in \{2,\ldots,\ell\}$.
We deduce
\[((u,0),(i,j)) = ((u,0),(i_l,j_l)) \in \Sem{\pathaut_{r_\ell}^{\ph_\ell}}{T_\run} = \Sem{\pathaut_{r}^{\ph}}{T_\run}\,.\]
This concludes the proof of Lemma~\ref{lem:pathaut}.
\end{proof}

\begin{lemma}\label{lem:piless}
For all $T=(n,k,\lambda) \in \TDA$ and $i,i' \in \set{n}$, we have
\[
((i,0),(i',0)) \in \Sem{\pi_<}{T}
~\Longleftrightarrow~ \forall \run \in \Runs{T}: \run_{i}^{0}(\idreg) < \run_{i'}^{0}(\idreg)\,.
\]
\end{lemma}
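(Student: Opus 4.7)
The plan is to prove the two directions separately; the non-trivial step is a pid-relabeling argument built on Lemma~\ref{lem:pathaut}(a).

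\textbf{Direction $(\Rightarrow)$.} Given $((i,0),(i',0)) \in \Sem{\pi_<}{T}$, unroll the transitive closure: there exist $\ell\ge 1$, column indices $i=i_0,i_1,\ldots,i_\ell=i'$, registers $r_0,r'_0,\ldots,r_{\ell-1},r'_{\ell-1}$, and positions $(j_m,k_m)\in\Coord{T}$ such that $((i_m,0),(j_m,k_m))\in\Sem{\pathaut_{r_m}^1}{T}$, the guard $r_m<r'_m$ belongs to $\lambda(j_m,k_m)$, and $((i_{m+1},0),(j_m,k_m))\in\Sem{\pathaut_{r'_m}^1}{T}$. Fix any $\run\in\Runs{T}$; by Lemma~\ref{lem:pathaut}(a), $\run_{i_m}^0(\idreg)=\hat\run_{j_m}^{k_m}(r_m)$ and $\run_{i_{m+1}}^0(\idreg)=\hat\run_{j_m}^{k_m}(r'_m)$. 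Since $\run$ validates the guard, these values satisfy $<$; chaining the $\ell$ strict inequalities yields $\run_i^0(\idreg)<\run_{i'}^0(\idreg)$.

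\textbf{Direction $(\Leftarrow)$.} Contrapose, assuming $((i,0),(i',0))\notin\Sem{\pi_<}{T}$; the case $i=i'$ is trivial, so suppose $i\neq i'$. Define $R\subseteq\set{n}\times\set{n}$ by $a\mathrel{R}b \iff ((a,0),(b,0))\in\Sem{\pi_<}{T}$. Transitivity of $R$ is immediate from the $(\,\cdot\,)^+$ in $\pi_<$. For irreflexivity, fix any reference run $\bar\run\in\Runs{T}$ (which exists since $T\in\TDA$): an instance $a\mathrel{R}a$ combined with the $(\Rightarrow)$ direction would yield $\bar\run_a^0(\idreg)<\bar\run_a^0(\idreg)$, absurd. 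Hence $R$ is a strict partial order and $\neg(i\mathrel{R}i')$. By the order-extension principle, pick a strict total order $\prec$ on $\set{n}$ that extends $R$ and satisfies $i'\prec i$, and choose distinct naturals $q_1,\ldots,q_n$ with $q_a<q_b \iff a\prec b$. Simulate the transitions of $T$ from the initial configuration of the ring $\ring=(n:q_1,\ldots,q_n)$ to obtain a candidate run $\run$ with $T_\run=T$ and $\run_a^0(\idreg)=q_a$ for each $a$.

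It remains to verify that every guard in $T$ holds in $\run$. For each position $(j,k)$ and register $r$, Lemma~\ref{lem:pathaut}(a) applied to $\bar\run$ together with uniqueness of pids identifies a unique \emph{origin} $a(j,k,r)\in\set{n}$ with $((a(j,k,r),0),(j,k))\in\Sem{\pathaut_r^1}{T}$; crucially, this datum depends only on $T$. For an $=$-guard $r=r'$ at $(j,k)$, validity in $\bar\run$ combined with Lemma~\ref{lem:pathaut}(a) forces $a(j,k,r)=a(j,k,r')=:a$, so in $\run$ we have $\hat\run_j^k(r)=q_a=\hat\run_j^k(r')$. For a $<$-guard $r<r'$ at $(j,k)$, set $a=a(j,k,r)$ and $b=a(j,k,r')$; the very definition of $\pi_<$ gives $a\mathrel{R}b$, hence $q_a<q_b$, and the guard holds. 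Therefore $\run\in\Runs{T}$ and $\run_i^0(\idreg)=q_i>q_{i'}=\run_{i'}^0(\idreg)$, contradicting the universal strict inequality.

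\textbf{Main obstacle.} The delicate step is this final verification: one must argue that the \emph{origin} of each intermediate register value in $T$ is determined by $T$ alone and not by the choice of pids, so that pid reassignments respecting $R$ automatically preserve both equality guards (same origin) and $<$-guards ($R$-related origins). Lemma~\ref{lem:pathaut}(a) combined with pid uniqueness pins these origins down, but threading the observation consistently through the simulation of $T$ is where the bookkeeping lies.
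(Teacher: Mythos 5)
Your proof is correct and follows essentially the same route as the paper's: the forward direction unrolls $\pi_<$ and chains strict inequalities via Lemma~\ref{lem:pathaut}(a), and the backward direction contraposes, extends the $T$-definable order on origin processes to a pid assignment with $q_{i'}<q_i$, and checks that the resulting pseudo run validates all $=$- and $<$-guards because register origins depend only on $T$. The only cosmetic difference is that you pass through a total linear extension where the paper merely picks any pid assignment compatible with the partial order and satisfying $p_i\ge p_{i'}$.
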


\newcommand{\bi}{u}
\newcommand{\bip}{u'}
\begin{proof}
There are two directions to show.

\medskip
\noindent
($\Rightarrow$): Suppose $((i,0),(i',0)) \in \Sem{\pi_<}{T}$. Then, there are $\ell \ge 1$ and $i=i_0,\ldots,i_\ell=i'$ such that
\[{((i_{l-1},0),(i_{l},0)) \in \Sem{\sum_{r,r' \in \Reg} \pathaut_{r}^{1} \cdot \test{r < r'} \cdot (\pathaut_{r'}^{1})^{-1}~}{T}}\]
for all $l \in \set{\ell}$.
Let $\run \in \Runs{T}$. By Lemma~\ref{lem:pathaut}, we have $\smash{\run_{i_{l-1}}^{0}(\idreg) < \run_{i_{l}}^{0}(\idreg)}$ for all $l \in \set{\ell}$. We deduce $\smash{\run_{i}^{0}(\idreg) < \run_{i'}^{0}(\idreg)}$.

\medskip
\noindent
($\Leftarrow$): We denote the processes in question by $\bi$ and $\bip$. Suppose that $((\bi,0),(\bip,0)) \not\in \Sem{\pi_<}{T}$. We are going to show that there is $\run \in \Runs{T}$ such that $\run_{\bi}^{0}(\idreg) \geq \run_{\bip}^{0}(\idreg)$. Let ${\prec} = \{(i,i') \mid ((i,0),(i',0)) \in \Sem{\pi_<}{T}\}$. In particular, $\bi \not\prec \bip$. By direction ($\Rightarrow$), we have that $\prec$ is a (strict) partial order.

Let $\ring=(n:p_1,\ldots,p_n)$ be any ring such that (i) $p_{\bi} \ge p_{\bip}$ and (ii) for all $i,i' \in \set{n}$, $i \prec i'$ implies $p_i < p_{i'}$. Since $\prec$ is a strict partial order and $\bi \not\prec \bip$, such a ring must exist.
Now, note that there is a unique \emph{pseudo} $\ring$-run
\[\smash{\run = {\conf_0 \confrel{t^1} \conf_1 \confrel{t^2} \ldots \confrel{t^k} \conf_k}}\]
(where $t^j = (t_1^j,\ldots,t_n^j) \in \Trans^n$) such that $T_\run=T$.
We will show that $\run$ is indeed also an $\ring$-run, which concludes the proof.

Let $(i,j) \in \Coord{T}$ and $r,r' \in \Reg$ such that $(r < r') \in t_{i}^{j}$. We have to show that $\hat{\run}_i^j(r) < \hat{\run}_i^j(r')$.
By Lemma~\ref{lem:pathaut}, there are $o,o' \in \set{n}$ such that
\begin{itemize}\itemsep=1ex
\item $\run_o^0(\idreg)=\hat{\run}_i^j(r)$ and $\run_{o'}^0(\idreg)=\hat{\run}_i^j(r')$, and
\item $((o,0),(i,j)) \in \Sem{\pathaut_{r}^{1}}{T_\run} \text{ and } ((o',0),(i,j)) \in \Sem{\pathaut_{r'}^{1}}{T_\run}$.
\end{itemize}
The latter implies
\[((o,0),(o',0)) \in \Sem{\pathaut_{r}^{1} \cdot \test{r < r'} \cdot (\pathaut_{r'}^{1})^{-1}}{T_\run}\,.\]
In particular, $((o,0),(o',0)) \in \Sem{\pi_<}{T_\run}$. We deduce $o \prec o'$. This implies $\run_o^0(\idreg) < \run_{o'}^0(\idreg)$. We conclude that $\run_i^j(r) < \run_i^j(r')$.

Finally, let $(i,j) \in \Coord{T}$ and $r,r' \in \Reg$ such that $(r = r') \in t_{i}^{j}$. Since $\Runs{T} \neq \emptyset$, there is a run that validates guard $r=r'$ at coordinate $(i,j)$. By Lemma~\ref{lem:pathaut}, this is actually true for all pseudo runs of $T$. We deduce $\run_i^j(r) = \run_i^j(r')$.

Note that run condition 1.\ is satisfied, since $T \in \TDA$. This concludes the proof.
\end{proof}

We will now proceed to the proof of Lemma~\ref{lem:datapdl}.

\begin{proof}[Proof of Lemma~\ref{lem:datapdl}.]
Recall that we have to show $L(\lcpdl_\DA) = \TDA$, where $\lcpdl_\DA \formdef \lcpdl_= \wedge \lcpdl_< \wedge \lcpdl_{\mathsf{col}}$. 

\medskip
\noindent
($\subseteq$):
Let $T=(n,k,\lambda) \in L(\psi_\DA)$. We will show $T \in \TDA$ by constructing a run
$\run$ of $\DA$ such that $T_\run=T$.

Again, let ${\prec} = \{(i,i') \mid ((i,0),(i',0)) \in \Sem{\pi_<}{T}\}$.
As $T,(1,0) \models \lcpdl_< \formdef \neg\existspath{\goright^\ast} \loopform{\pi_<}$, we have that $\prec$ is a strict partial order. Choose any ring $\ring=(n:p_1,\ldots,p_n)$ such that,
for all $i,i' \in \set{n}$, $i \prec i'$ implies $p_i < p_{i'}$. There is a unique pseudo $\ring$-run
\[{\run = {\conf_0 \confrel{t^1} \conf_1 \confrel{t^2} \ldots \confrel{t^k} \conf_k}}\]
of $\DA$ such that $T_\run=T$.
Let $j \in \set{k}$. We have to show $\conf_{j-1} \confrel{t^j} \conf_j$ where, this time, all run conditions are checked. Condition 4.\ of the definition of $\rightsquigarrow$ is satisfied thanks to the definition of a pseudo run. Condition 1.\ is ensured by $T \in L(\psi_\mathsf{col})$. Let $i \in \set{n}$ and suppose $(r=r') \in t_i^j$. We have $T,(i,j) \models \loopform{(\pathaut_{r}^{1})^{-1} \cdot \pathaut_{r'}^{1}}$. By Lemma~\ref{lem:pathaut}, we have $\hat{\run}_{i}^{j}(r) = \hat{\run}_{i}^{j}(r')$.
Finally, suppose $(r<r') \in t_i^j$. We proceed like in the reverse direction of the proof of Lemma~\ref{lem:piless} to show that $\hat{\run}_i^j(r) < \hat{\run}_i^j(r')$.

Altogether, it follows that $\run$ is a run.

\medskip
\noindent
($\supseteq$): Let $T=(n,k,\lambda) \in \Pictures$ such that $T \not\in L(\psi_\DA)$. To show $T \not\in \TDA$, we distinguish three (non-disjoint) cases.
\begin{itemize}\itemsep=1ex
\item Suppose $T \not\in L(\lcpdl_{\mathsf{col}})$. Obviously, this implies $T \not\in \TDA$.

\item Suppose $T \not\in (\lcpdl_=)$. Recall that
\[\lcpdl_= ~\formdef~ \forallpath{(\goright + \godown)^\ast} \bigwedge_{r,r' \in \Reg} \Bigl(r = r' ~\Rightarrow \loopform{(\pathaut_{r}^{1})^{-1} \cdot \pathaut_{r'}^{1}}\Bigr)\]
Thus, there are a coordinate $(i,j) \in \set{n} \times \setz{k}$ and registers $r_1,r_2 \in \Reg$ such that we have $(r_1 = r_2) \in T[i,j]$ and $T,(i,j) \not\models \loopform{(\pathaut_{r_1}^{1})^{-1} \cdot \pathaut_{r_2}^{1}}$. Towards a contradiction, suppose there is $\run \in \Runs{T}$. By Lemma~\ref{lem:pathaut}, there are (unique) $i_1,i_2 \in \set{n}$ such that $\run_{i_1}^{0}(\idreg) = \hat{\run}_{i}^{j}(r_1)$ and $\run_{i_2}^{0}(\idreg) = \hat{\run}_{i}^{j}(r_2)$, as well as $((i_1,0),(i,j)) \in \Sem{\pathaut_{r_1}^{1}}{T}$ and $(i_2,0),(i,j)) \in \Sem{\pathaut_{r_2}^{1}}{T}$. Since $T,(i,j) \not\models \loopform{(\pathaut_{r_1}^{1})^{-1} \cdot \pathaut_{r_2}^{1}}$, we have that $i_1 \neq i_2$. We deduce $\hat{\run}_{i}^{j}(r_1) \neq \hat{\run}_{i}^{j}(r_2)$, which contradicts $(r_1 = r_2) \in T[i,j]$. Altogether, we obtain $T \not\in \TDA$.

\item Suppose $T \not\in L(\lcpdl_<)$ where $\lcpdl_< ~\formdef~ \neg\existspath{\goright^\ast} \loopform{\pi_<}$.
Then, there is $i \in \set{n}$ such that $T,(i,0) \models \loopform{\pi_<}$. By Lemma~\ref{lem:piless}, we have $\run_{i}^{0}(\idreg) < \run_{i}^{0}(\idreg)$ for all runs $\run \in \Runs{T}$. Thus, $\Runs{T} = \emptyset$ and, therefore, $T \not\in \TDA$.
\end{itemize}
This concludes the proof of Lemma~\ref{lem:datapdl}.
\end{proof}



\section{Proof of Lemma~\ref{lem:lcpdl}}

We show a more general statement. First, call a local \DataPDLm formula $\phi$ \emph{good} if it does not contain any guard of the form $<$ or $\le$. Recall that we set $\TDA = \{\pictofrun{\run} \mid \run$ is a run of $\DA\}$ and, for a table $T \in \Pictures$, $\Runs{\pict}=\{\run \mid \run$ is run of $\DA$ such that $T_\run = T\}$.

\medskip

We will simultaneously show the following statements:
\begin{itemize}
\item For all local $\DataPDLm(\DA)$ formulas $\phi$:
\begin{itemize}
\item[(a)] for all $\pict \in \TDA$ and $(i,j) \in \Coord{T}$,
\[T,(i,j) \models \transl{\phi}
~~\Longleftrightarrow~~
\run,1,(i,j) \models \phi \text{ for all } \run \in \Runs{T}.\]
\end{itemize}

\item For all good local $\DataPDLm(\DA)$ formulas $\phi$:
\begin{itemize}
\item[(b)] for all runs $\run$ of $\DA$ and all $(i,j) \in \Coord{\run}$,
\[T_\run,(i,j) \models \transl{\phi}
~~\Longleftrightarrow~~
\run,1,(i,j) \models \phi.\]
\end{itemize}

\item For all $\DataPDLm(\DA)$ path formulas $\pi$:
\begin{itemize}
\item[(c)] for all runs $\run$ of $\DA$, we have $\Sem{\tilde{\pi}}{T_\run} = \Sem{\pi}{\run,1}$.
\end{itemize}
\end{itemize}

\noindent
We first consider local formulas. We proceed by induction on the structure of $\phi$. Note that (b) is a stronger statement: when we show that (b) holds for a formula, then (a) holds for that formula, too.

\begin{itemize}\itemsep=3ex
\item Suppose $\phi = \marked$. It is enough to show (b). Recall that $\transl{\marked} = \neg\existspath{\goleft}$. We have $T_\run,(i,j) \models \neg\existspath{\goleft} ~\Longleftrightarrow~ i=1 ~\Longleftrightarrow~ \run,1,(i,j) \models \marked$.

\item Suppose $\phi = s \in \States$. Again, it is enough to show (b). Recall that $\transl{s} = \gotocmd{s}$. By the definition of runs, the semantics of \DataPDLm, and $T_\run$, we have that $T_\run,(i,j) \models \gotocmd{s} ~\Longleftrightarrow~ \run,1,(i,j) \models s$.

\item Consider $\neg\phi$. Then, $\phi$ is a good formula. Recall that $\transl{\neg\phi} = \neg\transl{\phi}$. We have $T_\run,(i,j) \models \neg\transl{\phi} ~\Longleftrightarrow~ T_\run,(i,j) \not\models \transl{\phi} ~\Longleftrightarrow~ \text{(by I.H.(b)) } \run,1,(i,j) \not\models \phi ~\Longleftrightarrow~ \run,1,(i,j) \models \neg\phi$.

\item Suppose $\phi = (\phi_1 \wedge \phi_2)$.
\begin{itemize}\itemsep=1ex
\item[(a)] We have
\[\begin{array}{cl}
& T,(i,j) \models \transl{\phi_1} \wedge \transl{\phi_2}\\[1ex]
\Longleftrightarrow & T,(i,j) \models \transl{\phi_1} \text{ and } T,(i,j) \models \transl{\phi_2}\\[1ex]
\smash{\stackrel{\text{I.H.(a)}}{\Longleftrightarrow}} &
\bigl(\run,1,(i,j) \models \phi_1 \text{ for all } \run \in \Runs{T}\bigr) \text{ and }\\[0.5ex]
& \bigl(\run,1,(i,j) \models \phi_2 \text{ for all } \run \in \Runs{T}\bigr)\\[1ex]
\Longleftrightarrow & \run,1,(i,j) \models \phi_1 \wedge \phi_2 \text{ for all } \run \in \Runs{T}
\end{array}\]

\item[(b)] Suppose $\phi_1$ and $\phi_2$ are good. We have
\[\begin{array}{cl}
& T_\run,(i,j) \models \transl{\phi_1} \wedge \transl{\phi_2}\\[1ex]
\Longleftrightarrow & T_\run,(i,j) \models \transl{\phi_1} \text{ and } T_\run,(i,j) \models \transl{\phi_2}\\[1ex]
\smash{\stackrel{\text{I.H.(b)}}{\Longleftrightarrow}} & \run,1,(i,j) \models \phi_1 \text{ and } \run,1,(i,j) \models \phi_2\\[1ex]
\Longleftrightarrow & \run,1,(i,j) \models \phi_1 \wedge \phi_2
\end{array}\]
\end{itemize}

\item Consider $\phi = (\phi_1 \Rightarrow \phi_2)$. Then, $\phi_1$ is good.
\begin{itemize}\itemsep=1ex
\item[(a)] There are two directions to show:
\begin{itemize}
\item[($\Rightarrow$):]
We have 
\[\begin{array}{cl}
& T,(i,j) \models \transl{\phi_1} \Rightarrow \transl{\phi_2}\\[1ex]
\Longrightarrow & T,(i,j) \not\models \transl{\phi_1} \text{ or } T,(i,j) \models \transl{\phi_2}\\[1ex]
\smash{\stackrel{\text{I.H.(b),(a)}}{\Longrightarrow}} &
\bigl(\run,1,(i,j) \not\models \phi_1 \text{ for all } \run \in \Runs{T}\bigr) \text{ or }\\[0.5ex]
& \bigl(\run,1,(i,j) \models \phi_2 \text{ for all } \run \in \Runs{T}\bigr)\\[1ex]
\Longrightarrow & \run,1,(i,j) \models \phi_1 \Rightarrow \phi_2 \text{ for all } \run \in \Runs{T}
\end{array}\]

\item[($\Leftarrow$):] We have 
\[\begin{array}{cl}
& T,(i,j) \not\models \transl{\phi_1} \Rightarrow \transl{\phi_2}\\[1ex]
\Longrightarrow & T,(i,j) \models \transl{\phi_1} \text{ and } T,(i,j) \not\models \transl{\phi_2}\\[1ex]
\smash{\stackrel{\text{I.H.(b),(a)}}{\Longrightarrow}} &
\bigl(\run,1,(i,j) \models \phi_1 \text{ for all } \run \in \Runs{T}\bigr) \text{ and }\\[0.5ex]
& \bigl(\run,1,(i,j) \not\models \phi_2 \text{ for some } \run \in \Runs{T}\bigr)\\[1ex]
\Longrightarrow & \run,1,(i,j) \not\models \phi_1 \Rightarrow \phi_2 \text{ for some } \run \in \Runs{T}
\end{array}\]
\end{itemize}

\item[(b)] Here, we require that both $\phi_1$ and $\phi_2$ are good. Then,
\[\begin{array}{cl}
& T_\run,(i,j) \models \transl{\phi_1} \Rightarrow \transl{\phi_2}\\[1ex]
\Longleftrightarrow & T_\run,(i,j) \not\models \transl{\phi_1} \text{ or } T_\run,(i,j) \models \transl{\phi_2}\\[1ex]
\smash{\stackrel{\text{I.H.(b)}}{\Longleftrightarrow}} & \run,1,(i,j) \not\models \phi_1 \text{ or } \run,1,(i,j) \models \phi_2\\[1ex]
\Longleftrightarrow & \run,1,(i,j) \models \phi_1 \Rightarrow \phi_2
\end{array}\]
\end{itemize}

\item Consider formula $\forallpath{\pi}\phi$. Let $x=(i,j)$. For a set $A \subseteq \Coord{T} \times \Coord{T}$, let $A(x) = \{x' \in \Coord{T} \mid (x,x') \in A\}$.
\begin{itemize}\itemsep=1ex
\item[(a)] We have
\[\begin{array}{cl}
& T,x \models \forallpath{\transl{\pi}}\transl{\phi}\\[1ex]

\Longleftrightarrow & \forall x' \in \Sem{\transl\pi}{T}(x):  T,x' \models \transl{\phi}\\[1ex]

\smash{\stackrel{\text{I.H.(a)}}{\Longleftrightarrow}} &
\forall x' \in \Sem{\transl\pi}{T}(x):  \forall \run \in \Runs{T}: \run,1,x' \models \phi\\[1ex]

\Longleftrightarrow & \forall \run \in \Runs{T}: \forall x' \in \Sem{\transl\pi}{T_\chi}(x): \run,1,x' \models \phi\\[1ex]

\smash{\stackrel{\text{I.H.(c)}}{\Longleftrightarrow}} &
\forall \run \in \Runs{T}: \forall x' \in \Sem{\pi}{\chi,1}(x): \run,1,x' \models \phi\\[1ex]

\Longleftrightarrow &
\forall \run \in \Runs{T}: \run,1,x \models \forallpath{\pi}\phi\\[1ex]
\end{array}\]

\item[(b)] Suppose $\phi$ is good. We have
\[\begin{array}{cl}
& T_\run,x \models \forallpath{\transl\pi}\transl\phi\\[1ex]

\Longleftrightarrow & \forall x' \in \Sem{\transl\pi}{T_\run}(x):  T_\run,x' \models \transl{\phi}\\[1ex]

\smash{\stackrel{\text{I.H.(b)}}{\Longleftrightarrow}} &
\forall x' \in \Sem{\transl\pi}{T_\run}(x):  \run,1,x' \models \phi\\[1ex]

\smash{\stackrel{\text{I.H.(c)}}{\Longleftrightarrow}} &
\forall x' \in \Sem{\pi}{\run,1}(x):  \run,1,x' \models \phi\\[1ex]

\Longleftrightarrow &
 \run,1,x \models \forallpath{\pi}\phi
\end{array}\]
\end{itemize}


\item Suppose $\phi = \existsp{r_1}{r_2}{\pi_1}{\pi_2}{\le}$. Then, $\pi_1$ and $\pi_2$ are both 
unambiguous. By I.H.(c), $\transl{\pi_1}$ and $\transl{\pi_2}$ are unambiguous (wrt.\ symbolic runs).
We show (a):
\begin{center}
\begin{tabular}{rl}
& $T,(i,j) \models \transl{\existsp{r_1}{r_2}{\pi_1}{\pi_2}{\le}}$\\[2ex]

$\Longleftrightarrow$ &
\parbox[t]{35em}{$T,(i,j) \models \loopform{\transl{\pi_1} \cdot (\pathaut_{r_1}^{2})^{-1} \cdot (\pi_< + \stay) \cdot \pathaut_{r_2}^{2} \cdot (\transl{\pi_2})^{-1}}$}
\end{tabular}\vspace{1ex}

\begin{tabular}{rl}
$\Longleftrightarrow$ &
\parbox[t]{35em}{
there are coordinates $(i_1,j_1),(i_2,j_2),(i_1',0),(i_2',0) \in \Coord{T}$ such that:
\begin{enumerate}\itemsep=1ex
\item $((i,j),(i_1,j_1)) \in \Sem{\transl{\pi_1}}{T}$ and $((i,j),(i_2,j_2)) \in 
\Sem{\transl{\pi_2}}{T}$

\item $((i_1',0),(i_1,j_1)) \in \Sem{\pathaut_{r_1}^{2}}{T}$ and $((i_2',0),(i_2,j_2)) \in \Sem{\pathaut_{r_2}^{2}}{T}$

\item $((i_1',0),(i_2',0)) \in \Sem{\pi_<}{T}$ or $i_1' = i_2'$
\end{enumerate}}
\end{tabular}\vspace{1ex}

\begin{tabular}{rl}
$\stackrel{(\ast)}{\Longleftrightarrow}$ &
\parbox[t]{35em}{
there exist coordinates $(i_1,j_1),(i_2,j_2),(i_1',0),(i_2',0) \in \Coord{T}$ such that:
\begin{enumerate}\itemsep=1ex
\item $\forall \run \in \Runs{T}: ((i,j),(i_1,j_1)) \in \Sem{\pi_1}{\run,1}$ and $((i,j),(i_2,j_2)) \in 
\Sem{\pi_2}{\run,1}$

\item $\smash{\forall \run \in \Runs{T}: \run_{i_1'}^{0}(\idreg) = \run_{i_1}^{j_1}(r_1)}$ and
$\smash{\run_{i_2'}^{0}(\idreg) = \run_{i_2}^{j_2}(r_2)}$

\item $\smash{\bigl(\forall \run \in \Runs{T}: \run_{i_1'}^{0}(\idreg) < \run_{i_2'}^{0}(\idreg)\bigr)}$ or $i_1' = i_2'$
\end{enumerate}}
\end{tabular}\vspace{1ex}

\begin{tabular}{rl}
$\Longleftrightarrow$ &
\parbox[t]{35em}{
there exist coordinates $(i_1,j_1),(i_2,j_2),(i_1',0),(i_2',0) \in \Coord{T}$ such that:
\begin{enumerate}\itemsep=1ex
\item $\forall \run \in \Runs{T}: ((i,j),(i_1,j_1)) \in \Sem{\pi_1}{\run,1}$ and $((i,j),(i_2,j_2)) \in 
\Sem{\pi_2}{\run,1}$

\item $\smash{\forall \run \in \Runs{T}: \run_{i_1'}^{0}(\idreg) = \run_{i_1}^{j_1}(r_1)}$ and
$\smash{\run_{i_2'}^{0}(\idreg) = \run_{i_2}^{j_2}(r_2)}$

\item $\smash{\forall \run \in \Runs{T}: \run_{i_1'}^{0}(\idreg) < \run_{i_2'}^{0}(\idreg)}$ or
$\smash{\run_{i_1'}^{0}(\idreg) = \run_{i_2'}^{0}(\idreg)}$
\end{enumerate}}
\end{tabular}\vspace{1ex}

\begin{tabular}{rl}
$\stackrel{(\ast\ast)}{\Longleftrightarrow}$ &
\parbox[t]{35em}{
for all $\run \in \Runs{T}$, there are $(i_1,j_1),(i_2,j_2),(i_1',0),(i_2',0) \in \Coord{T}$ such that:
\begin{enumerate}\itemsep=1ex
\item $((i,j),(i_1,j_1)) \in \Sem{\pi_1}{\run,1}$ and $((i,j),(i_2,j_2)) \in 
\Sem{\pi_2}{\run,1}$

\item $\smash{\run_{i_1'}^{0}(\idreg) = \run_{i_1}^{j_1}(r_1)}$ and
$\smash{\run_{i_2'}^{0}(\idreg) = \run_{i_2}^{j_2}(r_2)}$

\item $\smash{\run_{i_1'}^{0}(\idreg) \le \run_{i_2'}^{0}(\idreg)}$
\end{enumerate}}
\end{tabular}\vspace{1ex}

\begin{tabular}{rl}
$\Longleftrightarrow$ & \parbox[t]{35em}{$\forall \run \in \Runs{T}: \run,1,(i,j) \models \existsp{r_1}{r_2}{\pi_1}{\pi_2}{\le}$}\\[3ex]

$(\ast)$ & by I.H.(c), and Lemmas~\ref{lem:pathaut} and \ref{lem:piless}\\[1ex]

$(\ast\ast)$ & by I.H.(c), Lemmas~\ref{lem:pathaut} and \ref{lem:piless}, and the fact that $\pi_1$ and $\pi_2$ are unambiguous,\\&the coordinates are uniquely determined by $T$, $(i,j)$, and $\phi$
\end{tabular}
\end{center}

\item The case $\phi = \existsp{r_1}{r_2}{\pi_1}{\pi_2}{<}$ is simpler than the previous one. We just have to adapt {\textsf{3.}}\ accordingly.


\item Consider the case $\phi = \bigl(\existsp{r_1}{r_2}{\pi_1}{\pi_2}{\neq}\bigr)$.
We show (b):
\begin{center}
\begin{tabular}{rl}
& $T_\run,(i,j) \models \transl{\existsp{r_1}{r_2}{\pi_1}{\pi_2}{\neq}}$\\[2ex]

$\Longleftrightarrow$ &
\parbox[t]{35em}{$T_\run,(i,j) \models \loopform{\transl{\pi_1} \cdot (\pathaut_{r_1}^{2})^{-1} \cdot (\goleft^+ + \goright^+) \cdot \pathaut_{r_2}^{2} \cdot (\transl{\pi_2})^{-1}}$}
\end{tabular}\vspace{1ex}

\begin{tabular}{rl}
$\Longleftrightarrow$ &
\parbox[t]{35em}{
there are coordinates $(i_1,j_1),(i_2,j_2),(i_1',0),(i_2',0) \in \Coord{\run}$ such that:
\begin{enumerate}\itemsep=1ex
\item $((i,j),(i_1,j_1)) \in \Sem{\transl{\pi_1}}{T_\run}$ and $((i,j),(i_2,j_2)) \in 
\Sem{\transl{\pi_2}}{T_\run}$

\item $((i_1',0),(i_1,j_1)) \in \Sem{\pathaut_{r_1}^{2}}{T_\run}$ and $((i_2',0),(i_2,j_2)) \in \Sem{\pathaut_{r_2}^{2}}{T_\run}$

\item $i_1' \neq i_2'$
\end{enumerate}}
\end{tabular}\vspace{1ex}

\begin{tabular}{rl}
$\Longleftrightarrow$ &
\parbox[t]{35em}{
(by I.H.(c) and Lemma~\ref{lem:pathaut})\\
there are coordinates $(i_1,j_1),(i_2,j_2),(i_1',0),(i_2',0) \in \Coord{\run}$ such that:
\begin{enumerate}\itemsep=1ex
\item $((i,j),(i_1,j_1)) \in \Sem{\pi_1}{\run,1}$ and $((i,j),(i_2,j_2)) \in 
\Sem{\pi_2}{\run,1}$

\item $\smash{\run_{i_1'}^{0}(\idreg) = {\run}_{i_1}^{{j_1}}(r_1)}$ and
$\smash{\run_{i_2'}^{0}(\idreg) = {\run}_{i_2}^{{j_2}}(r_2)}$

\item $i_1' \neq i_2'$
\end{enumerate}}
\end{tabular}\vspace{1ex}

\begin{tabular}{rl}
$\Longleftrightarrow$ & \parbox[t]{35em}{$\run,1,(i,j) \models \existsp{r_1}{r_2}{\pi_1}{\pi_2}{\neq}$}
\end{tabular}
\end{center}

\item The case $\phi = \bigl(\existsp{r_1}{r_2}{\pi_1}{\pi_2}{=}\bigr)$ is almost identical. In {\textsf{3.}}, we just replace $\neq$ by $=$.

\item Consider the path formula $\pi = \test{\phi}$. Note that $\phi$ is good. We show (c):
\[\begin{array}{cl}
& \Sem{\transl{\test{\phi}}}{T_\run} = \Sem{\test{\transl{\phi}}}{T_\run}\\[1ex]
= & \{(x,x) \mid x \in \Coord{\run}: T_\run,x \models \transl{\phi}\}\\[1ex]
\smash{\stackrel{\text{I.H.(b)}}{=}} & \{(x,x) \mid x \in \Coord{\run}: \run,1,x \models \phi\}\\[1ex]
= & \Sem{\test{\phi}}{\run,1}
\end{array}\]

\item Consider $\pi = \goright$. Suppose the coordinate set of $\run$ is $\set{n} \times \setz{k}$. We show (c):
\[\begin{array}{cl}
& \Sem{\transl{\goright}}{T_\run} = \Sem{\goright + \test{\neg\existspath{\goright}}\goleft^\ast\test{\neg\existspath{\goleft}}}{T_\run}\\[1ex]
= & \{((i,j),(i+1,j)) \mid (i,j) \in \set{n-1} \times \setz{k}\} \cup
\{((n,j),(1,j)) \mid j \in \setz{k}\}\\[1ex]
= & \Sem{\goright}{\run,1}\\[1ex]
\end{array}\]

\item The regular operations as well as $\goup$ and $\godown$ are obvious, and the case $\goleft$ is symmetric to $\goright$.

\end{itemize}


\section{Proof of Lemma~\ref{lem:satisf}}

Let us prove (a).\medskip

\noindent
($\Rightarrow$): Suppose $\DA \models \Phi = \Allrings{\locform}$. Let $T \in L(\lcpdl_\DA)$. By Lemma~\ref{lem:datapdl}, there is a run $\run$ of $\DA$ such that $T_\run=T$. Moreover, since $\DA \models \Phi$, all runs $\run$ of $\DA$ satisfy $\run,1,(1,0) \models \phi$. This applies, in particular, to all runs $\run$ such that $T_\run = T$. By Lemma~\ref{lem:lcpdl}, we have $T,(1,0) \models \transl{\phi}$. We conclude $L(\psi_\DA \wedge \neg\transl{\phi}) = \emptyset$.

\medskip

\noindent
($\Leftarrow$): Suppose $\DA \not\models \Allrings{\locform}$. Then, there are a ring $\ring=(n:\ldots)$, an $\ring$-run $\run$ of $\DA$, and a process $m \in \set{n}$ such that $\run,m,(m,0) \not\models \phi$. Since $\phi$ cannot distinguish isomorphic rings, we can shift $\ring$ until $m$ ``arrives'' on position $1$. Thus, there are $\ring'=(n:\ldots)$ and an $\ring'$-run $\run'$ of $\DA$ such that $\run',1,(1,0) \not\models \phi$. By Lemma~\ref{lem:lcpdl}, $T_{\run'},(1,0) \not\models \transl{\phi}$ and, therefore, $T_{\run'},(1,0) \models \neg\transl{\phi}$. Due to Lemma~\ref{lem:datapdl}, we also have $T_{\run'},(1,0) \models \psi_\DA$. we conclude $L(\lcpdl_\DA \wedge \neg\transl{\locform}) \neq \emptyset$.

\bigskip

Part (b) is shown in exactly the same way, restricting the height of a table and length of a run by the given bound  $\bound$.

  
\end{document}